\documentclass[11pt]{article}
\usepackage[a4paper,top=3cm,bottom=3.5cm,left=2.5cm,right=2.5cm,marginparwidth=1.75cm]{geometry}
\usepackage[T1]{fontenc}
\usepackage{graphicx}
\usepackage[title]{appendix}
\usepackage{tikz}
\usepackage{pgfplots}
\usepackage{cutwin}

\usepackage{amsmath,yhmath}
\usepackage{amsthm}
\usepackage{amssymb}
\usepackage{hyperref}
\usepackage{cleveref}
\usepackage{epstopdf}
\usepackage{comment}
\usepackage{todonotes}
\usepackage{color}
\usepackage{float}
\usepackage[sort&compress,numbers]{natbib}
\usetikzlibrary{shapes.geometric}
\usetikzlibrary{arrows,automata,quotes}
\usepackage{iftex}
\usepackage{xfrac} 
\usepackage{amsmath,yhmath}
\usepackage{amsthm}
\usepackage{amssymb}
\usepackage{mathtools}
\usepackage{nccmath}
\usepackage{hyperref}
\usepackage{caption, subcaption}
\usepackage[title]{appendix}
\usepackage{cleveref}
\usepackage{epstopdf}
\usepackage{comment}
\usepackage{todonotes}
\usepackage{color}
\usepackage{float}
\usepackage{cutwin}
\captionsetup{font={small,it}}
\usepackage[sort&compress,numbers]{natbib}
\usepackage{makecell}
\usepackage{array}   
\usepackage{multirow}
\usepackage{graphicx}
\usepackage{caption}
\usepackage{subcaption}
\newtheorem{thm}{Theorem}[section]
\newtheorem{cor}[thm]{Corollary}
\newtheorem{lemma}[thm]{Lemma}
\newtheorem{prop}[thm]{Proposition}
\theoremstyle{definition}
\newtheorem{definition}{Definition}[section]

\newtheorem{rem}[definition]{Remark}

\newcommand{\inc}{\mathrm{in}}
\newcommand{\s}{\mathrm{sc}}
\newcommand{\p}{\partial}
\newcommand{\ds}{\displaystyle}

\newcommand{\nm}{\noalign{\smallskip}}

\newcommand{\R}{\mathbb{R}}

\newcommand{\Z}{\mathbb{Z}}

\newcommand{\svdots}{\raisebox{3pt}{\scalebox{.6}{$\vdots$}}}
\newcommand{\sddots}{\raisebox{3pt}{\scalebox{.6}{$\ddots$}}}

\numberwithin{equation}{section}
\numberwithin{figure}{section}
\title{Scattering from time-modulated subwavelength resonators\thanks{\footnotesize
This work was supported in part by the Swiss National Science Foundation grant number
200021--200307.}}
\author{Habib Ammari\thanks{\footnotesize Department of Mathematics, ETH Z\"urich, R\"amistrasse 101, CH-8092 Z\"urich, Switzerland (habib.ammari@math.ethz.ch, jinghao.cao@sam.math.ethz.ch, liora.rueff@sam.math.ethz.ch).} \and Jinghao Cao\footnotemark[2] \and Erik Orvehed Hiltunen\thanks{\footnotesize Department of Mathematics, Yale University, New Haven, Connecticut, USA (erik.hiltunen@yale.edu).} \and Liora Rueff\footnotemark[2] }
\date{}
\begin{document}
\maketitle
\begin{abstract}
	We consider wave scattering from a system of highly contrasting resonators with time-modulated material parameters. In this setting, the wave equation reduces to a system of coupled Helmholtz equations that models the scattering problem. We consider the one-dimensional setting. In order to understand the energy of the system, we prove a novel higher-order discrete, capacitance matrix approximation of the subwavelength resonant quasifrequencies. Further, we perform numerical experiments to support and illustrate our analytical results and show how periodically time-dependent material parameters affect the scattered wave field. 
\end{abstract}
\noindent{\textbf{Mathematics Subject Classification (MSC2000):} 35J05, 35C20, 35P20, 74J20}
\vspace{0.2cm}\\
\noindent{\textbf{Keywords:}} time-modulated metamaterial, subwavelength resonator, scattering problem, scattering coefficient, energy conservation

\section{Introduction}
Time-modulated subwavelength metamaterials have recently emerged as a new paradigm for wave manipulation at subwavelength scales \cite{erik_JCP,reviewTM1,reviewTM2,reviewTM3}. The exploration of time as a novel degree of freedom for metamaterials has led to a plethora of wave phenomena such as nonreciprocity, zero refractive index, and wave amplification \cite{jinghao_liora,jinghao1,jinghao2,erik_JCP}. The building blocks of subwavelength metamaterials are subwavelength resonators: objects exhibiting resonant phenomena in response to wavelengths much greater than their size. The highly contrasting material parameters (relative to the background medium) of these objects are the crucial mechanism responsible for their subwavelength response. We now have a complete understanding of different wave scattering phenomena arising from static subwavelength resonator structures \cite{aihp,Hai_Habib,feppon_1d,florian_habib_ss_hc,effective_bubbles,sini1,sini2,sini3,jun}. However, little is known when waves are scattered by time-modulated subwavelength resonators. The main difficulty for studying wave scattering from time-modulated subwavelength resonators arises from folding of quasifrequencies into the first Brillouin zone in time. Hence, we shall only consider the quasifrequencies corresponding to eigenmodes essentially supported in the subwavelength regime and make the assumption that all the other eigenmodes do not contribute significantly to the scattering field away from the resonator system \cite{erik_JCP}.\par 
Our aim in this paper is to provide a mathematical and computational framework to describe wave scattering from systems of time-modulated high-contrast resonators. To ensure a strong coupling between the incident wave and the resonators, we assume that the incident  frequency is of the same order as the frequency of modulation of the material parameters. In this low-frequency  regime, we establish an expression of the scattered field in one space dimension. To complete the analytical formulation of the eigenmodes arising from a system of time-modulated resonators, we solve the governing equations numerically. \par
Furthermore, we introduce a new notion of the total energy of a system of time-modulated resonators (see Definition \ref{def:totEnergy}). We show how the total energy is affected by the time-modulated material parameters inside the resonators. It has been proven that one-dimensional static materials obey energy conservation \cite{feppon_1d}. Unlike the static case, we will show that the energy is not conserved in a time-modulated material. In order to perform accurate numerical experiments, we prove a novel higher-order discrete, capacitance matrix approximation, which can be applied to a single-resonator system and a general system of $N>1$ resonators. \par
Our theory in this paper applies to subwavelength resonator systems where the modulation frequency is of the same order as the subwavelength resonant quasifrequencies and the operating frequency, \textit{i.e.} the frequency of the incident wave, is near the resonant quasifrequencies. In this regime, strong interactions between the time-modulated resonators and the incident wave occur. Our results can also be generalised to higher space dimensions. Furthermore, it is worth emphasising that the time-modulations considered in this paper are very different from the travelling wave-form modulations discussed in, for instance, \cite{pendry1,pendry2,engheta,liu,rizza,nassar,horsley1,horsley2,Sharabi:22,Pendry:21}. By exploiting the subwavelength resonant properties of the system, they lead to fundamentally different wave propagation and scattering phenomena. Moreover, their mathematical treatment is more involved.\par 
Our paper is structured as follows. We start by providing an overview of the problem setting and introduce the governing equations in the form of a wave equation with time-modulated material parameters and suitable transmission conditions in \hyperref[sec:chpt2]{Section 2}. We particularly assume the material parameters to be periodically time-modulated. We focus on the one-dimensional setting for which we refer to our previous work \cite{jinghao_liora} for an explicit formulation of the solution. 
In \hyperref[sec:chpt3]{Section 3} we prove relevant theorems which we shall use in the following section. Specifically, we derive high-order estimates of the subwavelength resonant quasifrequencies in terms of the contrast parameter $\delta$. Our derivations are based on new capacitance matrix formulations. They apply to the case of a single resonator as well as to the case of multiple resonators. In \hyperref[sec:chpt4]{Section 4} we use that to formulate the scattered field from a system of subwavelength resonators. In \hyperref[sec:chpt5]{Section 5} we introduce the notion of energy for time-modulated systems and complement it with numerical simulations. We summarise our results in \hyperref[sec:chpt6]{Section 6} and formulate some extensions of our findings.

\section{Problem setting}\label{sec:chpt2}
We consider the wave propagation through a medium $\mathcal{U}\subset\mathbb{R}$, composed of high contrast materials with time-dependent material parameters. Let $\{D_i\}_{1\leq i\leq N}$ be a disjoint collection of $N$ resonators distributed inside $\mathcal{U}$. We denote the region taken up by the resonators by
\begin{align*}
    D:=\bigcup_{i=1}^ND_i.
\end{align*}\par 
In the one-dimensional case considered in this paper, we will follow the notation introduced in \cite{jinghao_liora}. In particular, we denote the resonators by $D_i:=\left(x_i^-,x_i^+\right)$, where $\left(x_i^{\pm}\right)_{1\leq i\leq N}$ are the $2N$ boundary points satisfying $x_i^+<x_{i+1}^-$, for any $1\leq i\leq N-1$. Moreover, the length of each resonator is defined by $\ell_i:=x_i^+-x_i^-$ and the length of the gap between the $i$-th and the $(i+1)$-th resonator by $\ell_{i(i+1)}:=x_{i+1}^--x_i^+$.\par 
We denote by $\rho$ and $\kappa$ the system's material parameters. In the acoustic setting we refer to $\rho$ as the material's density and  to $\kappa$ as the material's bulk modulus. We assume that $\rho$ is time-independent and piecewise constant, namely,
\begin{equation*}
    \rho(x)=\begin{cases} 
    \rho_0, & x \notin {D}, \\
    \rho_{\mathrm{r}}, & x \in D_i.
    \end{cases}
\end{equation*}
We assume the material parameter distribution $\kappa$ within each resonator to be periodic in the temporal coordinate $t$ with period $T:=2\pi/\Omega$, namely,
\begin{equation*}
    \kappa(x,t)=\begin{cases} 
    \kappa_0, & x \notin {D}, \\
    \kappa_{\mathrm{r}} \kappa_i(t), & x \in D_i.
    \end{cases}
\end{equation*}
In our numerical simulations, we shall consider $\kappa_i$ given by
\begin{equation}
    \kappa_i(t):=\frac{1}{1+\varepsilon_{\kappa,i}\cos\left(\Omega t+\phi_{\kappa,i}\right)}, \label{eq:rho_kappa}
\end{equation}
for all $1\leq i\leq N$, where  $\varepsilon_{\kappa,i} \in [0,1)$ are the amplitudes of the time-modulations and $\phi_{\kappa,i} \in [0,2\pi)$ the phase shifts. Note that we do not assume $\rho$ to be time-modulated due to the fact that time-modulating $\rho$ does not affect the resonant quasifrequencies at leading order in $\delta$, see \cite{jinghao_liora}.\par
We define the contrast parameter and the wave speeds by
\begin{equation} \label{defdelta}
    \delta:=\frac{\rho_{\mathrm{{r}}}}{\rho_0},\quad v_0:=\sqrt{\frac{\kappa_0}{\rho_0}},\quad v_{\mathrm{r}}:=\sqrt{\frac{\kappa_\mathrm{r}}{\rho_{\mathrm{r}}}},
\end{equation}
respectively. 
We consider an incident wave field $u^{\inc}(x,t)$ with a frequency $\omega$ and such that 
\begin{align}\label{eq:PDE_uin}
    \frac{\partial^2}{\partial t^2} u^{\inc}-v_0^2\Delta u^{\inc}=0.
\end{align}
Then, we let $u^{\mathrm{sc}}$ be defined  by
\begin{equation} \label{defscat}
    u^{\mathrm{sc}}(x,t) = \begin{cases} 
    u(x,t)- u^{\mathrm{in}}(x,t), & x \notin D,\\
    u(x,t), & x \in D,
    \end{cases}
\end{equation}
where $u$ is the total wave. For $x\notin D$, $u^{\mathrm{sc}}$ is the scattered wave. \par 
We assume that $u^{\mathrm{sc}}(x,t) \mathrm{e}^{\mathrm{i} \omega t}$ and $u^{\mathrm{in}}(x,t) \mathrm{e}^{\mathrm{i} \omega t}$ are periodic with respect to $t$ with period $T$ and write their Fourier coefficients with respect to $t$ as follows:  
\begin{equation}\label{def:us_Fourier}
    u^{\mathrm{sc}}(x,t)=\sum\limits_{n=-\infty}^{\infty}v^{\mathrm{sc}}_n(x)\mathrm{e}^{-\mathrm{i}(\omega+n\Omega )t},\quad u^{\mathrm{in}}(x,t)=\sum\limits_{n=-\infty}^{\infty}v^{\mathrm{in}}_n(x)\mathrm{e}^{-\mathrm{i}(\omega+n\Omega )t},\quad\forall\,x\in\mathbb{R},\,t\geq0,
\end{equation}
where $\omega$ is the frequency of the incident wave.\par 
The governing equations for the wave scattering by the collection of resonators $D$ are given by \cite{jinghao_liora,erik_JCP}:
\begin{equation}\label{eq:dD_system_u} 
	\left\{
	\begin{array} {ll}
	\ds \frac{\p^2}{\p t^2}u^\s(x,t) - v_0^2\Delta u^\s(x,t) = 0, \quad &x\notin D, \\[1em]
	\ds \frac{\p }{\p t } \frac{1}{\kappa_i(t)} \frac{\p}{\p t}u^\s(x,t) - v_{\mathrm{r}}^2 \Delta u^\s(x,t) = 0, \quad &x\in D_i, \quad i=1,\ldots,N,\\[1em]
	\ds u^\s|_-(x_i^-,t) - u^\s|_+(x_i^-,t) = u^\inc|_-(x_i^-,t), \quad &i=1,\dots,N, \\[0.5em] 	
    \ds u^\s|_-(x_i^+,t) - u^\s|_+(x_i^+,t) = -u^{\inc}|_+(x_i^+,t), \quad &i=1,\dots,N, \\[0.5em] 
	\ds \frac{\p u^\s}{\p x }\bigg|_{+}(x_i^-,t) - \delta\frac{\p u^\s}{\p x }\bigg|_{-}(x_i^-,t) = \delta \frac{\p u^\inc}{\p x }\bigg|_{-}(x_i^-,t), &i=1,\dots,N, \\[1em]
	\ds  \frac{\p u^\s}{\p x }\bigg|_{-}(x_i^+,t) - \delta\frac{\p u^\s}{\p x }\bigg|_{+}(x_i^+,t) = \delta \frac{\p u^\inc}{\p x }\bigg|_{+}(x_i^+,t), &i=1,\dots,N, \\ [1em]
	u^\s \text{ is an outgoing wave,}
\end{array}		
\right.
\end{equation}
where we use the notation
\begin{align}
    \left.w\right|_{\pm}(x):=\lim_{s\rightarrow0,\,s>0}w(x\pm s).
\end{align}\par 
The subwavelength resonant quasifrequencies $\omega_i,\, i=1,\ldots,N$, are given by the set of $\omega$ such that there is a non-zero solution to \eqref{eq:lin_system} with $u^\textrm{in}(x,t) = 0$. We use the term \textit{quasi}frequency  to emphasise that the real part of $\omega_i$ is defined modulo $\Omega$; here, we choose real parts in the interval $[-\Omega/2,\Omega/2)$. 

We assume that the sizes of the resonators $D_i$ are of order one, the material contrast parameter 
$\delta \ll 1$ and $\Omega = O(\sqrt{\delta})$.  Under these assumptions, we know from \cite{jinghao_liora,erik_JCP} that (\ref{eq:dD_system_u}) with 
$u^\inc =0$ admits $2N$ subwavelength resonant quasifrequencies $\omega_i, i=1,\ldots,N$. These subwavelength quasifrequencies go to zero as ${\delta}$ goes to zero and correspond to eigenmodes essentially supported in the low-frequency regime. They can be computed either by solving a root-finding problem through Muller's method or as the Floquet exponents of a system of $N$ differential equations written in terms of the so-called capacitance matrix. We shall adhere to the latter method.\par 
Decomposing $u^{\mathrm{sc}}$ into its Fourier series as given by (\ref{def:us_Fourier}) allows us to pose (\ref{eq:dD_system_u}) on the modes $v^{\mathrm{sc}}_n$, for all $n\in\mathbb{Z}$, as follows:
\begin{equation}\label{eq:1DL_system}
	\left\{
	\begin{array} {ll}
	\ds  \frac{\mathrm{d}^2}{\mathrm{d}x^2}v^{\mathrm{sc}}_n+\frac{\rho_0(\omega+n\Omega)^2}{\kappa_0}v^{\mathrm{sc}}_n=0, \quad &x\notin D, \\[1em]
	\ds \frac{\mathrm{d}^2}{\mathrm{d}x^2}v_{n}^{\mathrm{sc}}+\frac{\rho_{\mathrm{r}}(\omega+n\Omega)^2}{\kappa_{\mathrm{r}}}v_{i,n}^{**}=0, \quad &x\in D_i, \quad i=1,\ldots,N,\\[1em]
	\ds \left.v^{\mathrm{sc}}_n\right|_{-}\left(x_i^-\right) -\left.v^{\mathrm{sc}}_n\right|_{+}\left(x_i^-\right)=v_n^{\inc}|_-(x_i^-), \quad &\forall\,i=1,\dots,N, \\[0.5em]
    \ds \left.v^{\mathrm{sc}}_n\right|_{-}\left(x_i^+\right) -\left.v^{\mathrm{sc}}_n\right|_{+}\left(x_i^+\right)=-v_n^{\inc}|_+(x_i^+), \quad &\forall\,i=1,\dots,N, \\[0.5em] 	
	\ds \left.\frac{\mathrm{d} v_{n}^{\mathrm{sc}}}{\mathrm{~d} x}\right|_{+}\left(x_i^{-}\right)-\left.\delta \frac{\mathrm{d} v^{\mathrm{sc}}_n}{\mathrm{d} x}\right|_{-}\left(x_i^{-}\right)=\left.\delta\frac{\mathrm{d}v_n^{\mathrm{in}}}{\mathrm{d}x}\right|_{-}(x_i^{-}), &  \forall\,i=1,\ldots,N, \\[1em]
    \ds \left.\frac{\mathrm{d} v_{n}^{\mathrm{sc}}}{\mathrm{~d} x}\right|_{-}\left(x_i^{+}\right)-\left.\delta \frac{\mathrm{d} v^{\mathrm{sc}}_n}{\mathrm{d} x}\right|_{+}\left(x_i^{+}\right)=\left.\delta\frac{\mathrm{d}v_n^{\mathrm{in}}}{\mathrm{d}x}\right|_{+}(x_i^{+}), &  \forall\,i=1,\ldots,N, \\[1em]
	\ds  \left( \frac{\mathrm{d}}{\mathrm{d} |x|} - \mathrm{i}
 \frac{(\omega + n \Omega)}  {v_0}  \right) v^{\mathrm{sc}}_n = 0, & 
 x \in (-\infty, x_1^-) \cup (x_N^+, + \infty).
\end{array}		
\right.
\end{equation}
We denote the wave number corresponding to the $n$-th mode inside and outside of the resonators by
\begin{align*}
    k^{(n)}_{\mathrm{r}}:=\frac{\omega+n\Omega}{v_{\mathrm{r}}},\quad k^{(n)}:=\frac{\omega+n\Omega}{v_0},
\end{align*} 
respectively, while the functions $v_{i, n}^{* *}(x)$ are defined in each resonator $D_i$, $i=1,\dots,N$, through the convolution
\begin{equation}\label{def:conv_v}
    v_{i,n}^{* *}(x)=\frac{1}{\omega+n \Omega} \sum_{m=-\infty}^{\infty} k_{i,m}(\omega+(n-m) \Omega) v^{\mathrm{sc}}_{n-m}(x),
\end{equation}
where $k_{i, m}$ are the Fourier series coefficients of $1 / \kappa_i(t)$:
\begin{align*}
    \frac{1}{\kappa_i(t)} = \sum_{n=-M}^M k_{i, n}\mathrm{e}^{-\mathrm{i}n\Omega t}.
\end{align*}

It has been shown in \cite[Lemma 2.1]{feppon_1d} that the solution of the Helmholtz equation has the following exponential form in $\mathbb{R}\backslash D$:
\begin{align}\label{eq:vi_ansatz_exterior}   v^{\mathrm{sc}}_n(x)=\alpha_n^i\mathrm{e}^{\mathrm{i}k^{(n)}x}+\beta_n^i\mathrm{e}^{-\mathrm{i}k^{(n)}x},\quad\forall\,x\in\left(x_i^+,x_{i+1}^-\right),
\end{align}
for all $i=1,\dots,N-1$. Moreover, the following lemma characterising the solution to the Helmholtz equation inside the resonators has been proved in \cite{jinghao_liora}.
\begin{lemma}
    For each resonator $D_i$, for $i=1,\dots,N$, the interior problem can be written as an infinitely-dimensional system of ordinary differential equations, $\Delta\mathbf{v}_i+C_i\mathbf{v}_i=\mathbf{0}$, with $\mathbf{v}_i:=\begin{bmatrix}v_{i,j}\end{bmatrix}_{j\in\mathbb{Z}}, v_{i,j}=\left.v^{\mathrm{sc}}_j\right|_{D_i}$ and
    \begin{align*}
        C_i:=\left[\begin{smallmatrix}
            & & & & & & &\\
            \sddots & & \sddots & & \sddots & & &\\
            & \gamma^{-K}_{i,-M}(\omega) & \cdots & \gamma^{-K}_{i,0}(\omega) & \cdots & \gamma^{-K}_{i,M}(\omega) & & \\
            & \sddots & & \sddots & & \sddots & & & \\
            & & \gamma^0_{i,-M}(\omega) & \cdots & \gamma^0_{i,0}(\omega) & \cdots & \gamma^0_{i,M}(\omega) &  \\
            & & \sddots & & \sddots & & \sddots & \\
            & & & \gamma^{K}_{i,-M}(\omega) & \cdots & \gamma^{K}_{i,0}(\omega) & \cdots & \gamma^{K}_{i,M}(\omega) \\
            & & & \sddots & & \sddots & & \sddots \\
            & & & & & & &
        \end{smallmatrix}\right],
    \end{align*}
    where the coefficients $\gamma_{i,m}^n(\omega)$ are given by
    \begin{align*}
        \gamma_{i,m}^n(\omega):=k_{i,m}k_{\mathrm{r}}^{(n)}k_{\mathrm{r}}^{(n-m)},\quad \forall -\infty<m,n<\infty.
    \end{align*}
    Let $\{\Tilde{\lambda}_j^i\}_{j\in\mathbb{Z}}$ be the set of all eigenvalues of $C_i$ with corresponding eigenvectors $\{\mathbf{f}^{j,i}\}_{j\in\mathbb{Z}}$. Using the square-roots $\pm\lambda_j^i$ of the eigenvalues $\Tilde{\lambda}_j^i$, the solution to the interior problem over $D_i$ takes the form
    \begin{align}\label{eq:vi_ansatz}
        \mathbf{v}_i(x)=\sum\limits_{j=-\infty}^{\infty}\left(a_j^i\mathrm{e}^{\mathrm{i}\lambda_j^ix}+b_j^i\mathrm{e}^{-\mathrm{i}\lambda_j^ix}\right)\mathbf{f}^{j,i},\quad \forall\,x\in\left(x_i^-,x_i^+\right),
    \end{align}
    for coefficients $\{(a_j^i,b_j^i)\}_{j\in\mathbb{Z}}\subset\mathbb{R}^2$.
\end{lemma}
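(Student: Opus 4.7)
The plan is to derive the matrix equation $\Delta \mathbf{v}_i + C_i \mathbf{v}_i = \mathbf{0}$ directly from the second line of \eqref{eq:1DL_system} by substituting the convolution \eqref{def:conv_v}, and then obtain \eqref{eq:vi_ansatz} by an eigendecomposition of $C_i$.

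For the first step, I would fix a resonator $D_i$ and an index $n \in \mathbb{Z}$, rewrite the interior equation of \eqref{eq:1DL_system} as $\mathrm{d}^2 v_n^{\mathrm{sc}}/\mathrm{d}x^2 + (\rho_{\mathrm{r}}(\omega+n\Omega)^2/\kappa_{\mathrm{r}})\,v_{i,n}^{**} = 0$, and plug in \eqref{def:conv_v}. The prefactor $\rho_{\mathrm{r}}(\omega+n\Omega)^2/\kappa_{\mathrm{r}}$ combines with the $1/(\omega+n\Omega)$ inside the convolution; using $v_{\mathrm{r}}^2 = \kappa_{\mathrm{r}}/\rho_{\mathrm{r}}$ and $k_{\mathrm{r}}^{(p)}=(\omega+p\Omega)/v_{\mathrm{r}}$, the coefficient of $v_{n-m}^{\mathrm{sc}}$ reduces to $k_{i,m}\,k_{\mathrm{r}}^{(n)} k_{\mathrm{r}}^{(n-m)} = \gamma_{i,m}^n(\omega)$. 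Stacking the resulting identities over $n\in\mathbb{Z}$ and writing $v_{i,j}=v_j^{\mathrm{sc}}|_{D_i}$, one recognises the $n$-th row of the infinite system $\Delta \mathbf{v}_i + C_i \mathbf{v}_i = \mathbf{0}$, with the banded structure displayed in the statement; the bandwidth $M$ reflects the finite Fourier support of $1/\kappa_i(t)$.

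For the second step, let $\{\tilde{\lambda}_j^i,\mathbf{f}^{j,i}\}_{j\in\mathbb{Z}}$ be the spectral data of $C_i$, expand $\mathbf{v}_i(x) = \sum_j \phi_j(x)\,\mathbf{f}^{j,i}$, and project $\Delta \mathbf{v}_i + C_i\mathbf{v}_i = \mathbf{0}$ onto the eigenvector $\mathbf{f}^{j,i}$. Since the eigenvectors are (by hypothesis) linearly independent, this decouples the system into the scalar equations $\phi_j''(x) + \tilde{\lambda}_j^i\, \phi_j(x) = 0$. Choosing $\lambda_j^i$ as a square root of $\tilde{\lambda}_j^i$ yields the general solution $\phi_j(x) = a_j^i\e^{\mathrm{i}\lambda_j^i x} + b_j^i\e^{-\mathrm{i}\lambda_j^i x}$, and resumming over $j$ delivers \eqref{eq:vi_ansatz}. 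The free scalars $a_j^i,b_j^i$ are later fixed by the transmission conditions in \eqref{eq:1DL_system}.

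The main obstacle is the rigorous handling of the infinite-dimensional eigendecomposition of $C_i$: existence of a complete eigenbasis and convergence of the expansion in an appropriate sequence space. The cleanest route is to truncate to $|n|\leq K$, where $C_i$ becomes a finite banded matrix which is generically diagonalisable, since for $\varepsilon_{\kappa,i}=0$ it is already diagonal with entries $(k_{\mathrm{r}}^{(n)})^2$, so diagonalisability is preserved under small modulation by continuity of the spectrum. One then carries out the argument in finite dimension and passes to the limit $K\to\infty$ using the uniform truncation bounds already employed in \cite{jinghao_liora}; away from resonances the decay of the Fourier coefficients $v_{i,j}$ in $|j|$ ensures convergence.
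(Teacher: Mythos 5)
Your derivation is correct and is essentially the paper's own route: the paper states this lemma without proof, deferring to \cite{jinghao_liora}, and the argument there is exactly your two steps --- substitute the convolution \eqref{def:conv_v} into the interior equation so that the prefactor $\rho_{\mathrm{r}}(\omega+n\Omega)^2/\kappa_{\mathrm{r}}$ cancels against the $1/(\omega+n\Omega)$ to give $\gamma_{i,m}^n = k_{i,m}k_{\mathrm{r}}^{(n)}k_{\mathrm{r}}^{(n-m)}$, then diagonalise $C_i$ to decouple into scalar Helmholtz equations. Your closing remark on justifying the infinite-dimensional eigendecomposition via truncation to $|n|\le K$ is also consistent with how the paper handles this (see Remark \ref{rmk:K1}).
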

\begin{rem}\label{rmk:K1}
    As in \cite{jinghao_liora}, we introduce a truncation parameter $K$ enabling us to solve the governing equations (\ref{eq:1DL_system}) numerically for the interior coefficients. Namely, we now write
    \begin{align}\label{eq:vi_ansatz_K}
        \mathbf{v}_i=\sum\limits_{j=-K}^{K}\left(a_j^i\mathrm{e}^{\mathrm{i}\lambda_j^ix}+b_j^i\mathrm{e}^{-\mathrm{i}\lambda_j^ix}\right)\mathbf{f}^{j,i},\quad \forall\,x\in\left(x_i^-,x_i^+\right).
    \end{align}
    This assumption is feasible due to the fact that
    \begin{align*}
        ||v_n||_2=o\left(\frac{1}{n}\right)\quad\text{as}\,\,n\to\infty.
    \end{align*}
\end{rem}
This completes the geometric setup and solution form of our one-dimensional system.

\section{Capacitance matrix approximations}
\label{sec:chpt3}
In this section, we derive two novel approximation formulae to the subwavelength resonant quasifrequencies of (\ref{eq:1DL_system}). Firstly, we introduce a higher-order (in terms of the contrast parameter $\delta$) capacitance matrix approximation for a general $N$ in order to obtain the imaginary parts of the subwavelength resonant quasifrequencies, which are not captured by the already known capacitance matrix approximation \cite{jinghao_liora}. In the single-resonator case, this formula simplifies and admits explicit solutions.\par 
As reference, we recall the static case and a single resonator of length $\ell$. From \cite[Remark 3.4]{feppon_1d}, the subwavelength resonant frequencies are given by
\begin{align}\label{eq:om1_formula}
    \omega_0(\delta)=0,\quad \omega_1(\delta)= - \frac{\mathrm{i} v_{\mathrm{r}}}{\ell} \log \left( 1 + \frac{2 v_{\mathrm{r}} \delta}{ v_0 - v_{\mathrm{r}} \delta}\right)= - \frac{ 2 \mathrm{i} (v_{\mathrm{r}})^2 \delta}{\ell v_0} + O(\delta^3). 
\end{align}
Our numerical results shown in Figure \ref{fig:w_N1} agree with the here presented formula approximating the non-zero subwavelength resonant frequency.
\begin{figure}[H]
    \centering
    \includegraphics[width=0.6\textwidth]{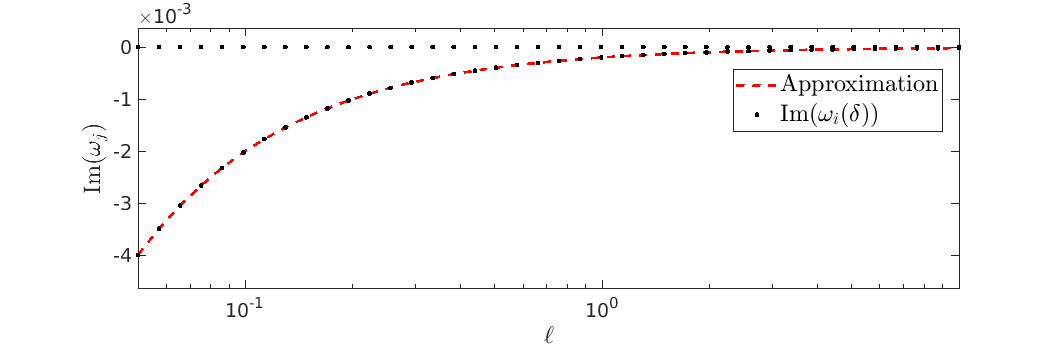}
    \caption{The imaginary parts of the subwavelength resonant frequencies $\omega_0$ and $\omega_1$ for fixed $\delta=0.0001,\,K=4$ and static material parameters, as a function of $\ell$. Note that the real parts of $\omega_0$ and $\omega_1$ are both zero. We use a logarithmic $x$-axis to present our numerical results.}\label{fig:w_N1}
\end{figure}

We now turn to the time-modulated case. To understand the behaviour of the imaginary parts of the subwavelength resonant quasifrequencies of a system of $N$ time-modulated resonators we need to derive a higher-order capacitance matrix approximation for the $N$-resonator system.\par 
Let $D_i =(x_i^-, x_i^+)$. Then for any smooth function $f:\mathbb{R}\rightarrow \mathbb{R}$ we define $ I_{\partial D_i}[f]$ by \cite{jinghao_liora}
\begin{equation} \label{def:i}
    I_{\partial D_i}[f]:= \frac{\mathrm{d} f}{\mathrm{d} x}\bigg\vert_-(x_i^-)- \frac{\mathrm{d} f}{\mathrm{d} x}\bigg\vert_+(x_i^+).
\end{equation}
Next, we define the capacitance matrix $C=(C_{ij})_{i,j=1,\ldots,N}$ by
\begin{align} \label{def:cap}
	C_{ij}:=I_{\partial D_i}[V_j]= \frac{\mathrm{d} V_j}{\mathrm{d} x}\bigg\vert_-(x_i^-)- \frac{\mathrm{d} V_j}{\mathrm{d} x}\bigg\vert_+(x_i^+),
\end{align}
for $ i,j=1,\ldots, N,$ where $V_j$ is the solution of 
\begin{align} \label{def:vi}
        \begin{cases}
            -\frac{\mathrm{d}^2}{\mathrm{d}x^2}V_j=0,&\text{in }\mathbb{R}\backslash D,\\
            V_j(x)=\delta_{ij},&x\in D_i, \\
            V_j(x)= O(1), &x\to \infty.
        \end{cases}
    \end{align}
Hence, we have the following characterisation of the subwavelength resonant quasifrequencies. 
\begin{thm}\label{thm3.2}
   Consider a system of $N\geq 1$ resonators. Let $C$ denote the corresponding capacitance matrix given by (\ref{def:cap}). As $\delta\to0$, the subwavelength resonant quasifrequencies $\omega$ of (\ref{eq:1DL_system}) are, to order $O(\delta^{3/2})$, the quasifrequencies of the system of ordinary differential equations
    \begin{equation}\label{eq:new_CapApprox}
        C\mathbf{c}(t) +\frac{1}{v_0}D\frac{\mathrm{d}}{\mathrm{d}t}\mathbf{c}(t) = -L\frac{ \rho_{\mathrm{r}}}{\delta \kappa_{\mathrm{r}}} \frac{\mathrm{d}}{\mathrm{d} t}\left(K(t)\frac{\mathrm{d} }{\mathrm{d} t}\mathbf{c}(t)\right),
    \end{equation}
    where $L=\mathrm{diag}(\{\ell_i\}_{i=1,\dots,N})$ and $K(t)=\mathrm{diag}(\{\frac{1}{\kappa_i(t)}\}_{i=1,\dots,N})$. The only non-zero entries of the $N\times N$ diagonal matrix $D$ are given by
    \begin{equation*}
    \begin{cases}
        D_{11} = D_{NN} = 1, \quad & \text{if} \ N\geq 2,\\
        D_{11} = 2, & \text{if} \ N = 1.
    \end{cases}
    \end{equation*}
\end{thm}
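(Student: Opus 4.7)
The plan is to derive \eqref{eq:new_CapApprox} in three steps: (i) integrate the interior wave equation of \eqref{eq:dD_system_u} over each resonator $D_i$; (ii) use the jump conditions to replace interior normal derivatives at $\p D_i$ with $\delta$ times the exterior ones; (iii) compute those exterior derivatives from the leading-order form of $u^{\mathrm{sc}}$ outside $D$, distinguishing the bounded gaps (which produce the capacitance matrix $C$) from the two unbounded regions (which produce the radiation matrix $D$).

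For the interior step, the ansatz \eqref{eq:vi_ansatz} combined with the scaling $\omega,\Omega=O(\sqrt\delta)$ gives $\gamma_{i,m}^n=k_{i,m}k_{\mathrm r}^{(n)}k_{\mathrm r}^{(n-m)}=O(\delta)$, so the eigenvalues of $C_i$ are $O(\delta)$ and $\lambda_j^i=O(\sqrt\delta)$. Taylor-expanding $\mathrm e^{\pm\mathrm i\lambda_j^i x}$ on the $O(1)$ interval $D_i$ yields $\mathbf v_i(x)=\mathbf v_i(x_i^-)+O(\sqrt\delta)$, so $u^{\mathrm{sc}}(x,t)=c_i(t)+O(\sqrt\delta)$ on $D_i$ for a suitable function $c_i$. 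Integrating the interior wave equation over $D_i$ then gives
\[
\ell_i\,\frac{\mathrm d}{\mathrm dt}\!\left(\frac{1}{\kappa_i(t)}\frac{\mathrm d c_i}{\mathrm dt}\right)=v_{\mathrm r}^2\!\left[\frac{\p u^{\mathrm{sc}}}{\p x}\bigg|_-\!(x_i^+)-\frac{\p u^{\mathrm{sc}}}{\p x}\bigg|_+\!(x_i^-)\right]+O(\sqrt\delta),
\]
and the transmission conditions in \eqref{eq:dD_system_u} with $u^{\inc}\equiv 0$ convert the bracket to $-\delta\, I_{\partial D_i}[u^{\mathrm{sc}}]$ via the definition \eqref{def:i}, yielding the matrix identity $L\tfrac{\rho_{\mathrm r}}{\delta\kappa_{\mathrm r}}\tfrac{\mathrm d}{\mathrm dt}\!\left(K(t)\tfrac{\mathrm d\mathbf c}{\mathrm dt}\right)=-\bigl(I_{\partial D_i}[u^{\mathrm{sc}}]\bigr)_{i=1}^N$ up to the error above.

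For the exterior step, each Fourier mode $v^{\mathrm{sc}}_n$ in an interior gap $(x_i^+,x_{i+1}^-)$ solves the small-parameter Helmholtz equation $(\mathrm d^2/\mathrm dx^2+(k^{(n)})^2)v^{\mathrm{sc}}_n=0$ and is therefore affine up to $O((k^{(n)})^2)=O(\delta)$, matching values $c_i^{(n)},c_{i+1}^{(n)}$ at the endpoints; summing over $n$ exactly reproduces $(C\mathbf c)_i$ for interior indices through \eqref{def:cap}--\eqref{def:vi}. In the unbounded regions the radiation conditions force $v^{\mathrm{sc}}_n(x)=\beta_n\mathrm e^{-\mathrm i k^{(n)}x}$ on $(-\infty,x_1^-)$ and $\alpha_n\mathrm e^{\mathrm i k^{(n)}x}$ on $(x_N^+,\infty)$, i.e.\ $u^{\mathrm{sc}}(x,t)=F(t+x/v_0)$ and $G(t-x/v_0)$ respectively in the time domain, so matching $F(t+x_1^-/v_0)=c_1(t)$ and $G(t-x_N^+/v_0)=c_N(t)$ yields
\[
\frac{\p u^{\mathrm{sc}}}{\p x}\bigg|_-\!(x_1^-,t)=\frac{1}{v_0}\dot c_1(t),\qquad \frac{\p u^{\mathrm{sc}}}{\p x}\bigg|_+\!(x_N^+,t)=-\frac{1}{v_0}\dot c_N(t).
\]
These produce the $\tfrac{1}{v_0}D_{ii}\dot c_i$ terms at the boundary indices, and for $N=1$ both radiation contributions are collected at the single resonator (which is simultaneously the leftmost and rightmost), whence $D_{11}=2$. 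Altogether $I_{\partial D_i}[u^{\mathrm{sc}}]=(C\mathbf c)_i+\tfrac{1}{v_0}(D\dot{\mathbf c})_i+O(\delta)$, and substituting back yields \eqref{eq:new_CapApprox}.

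The main obstacle is the error accounting to reach the claimed $O(\delta^{3/2})$ bound on the quasifrequencies. The $O(\sqrt\delta)$ interior correction and $O(\delta)$ exterior correction, once multiplied by the $\delta$ prefactor from the transmission conditions, perturb the equation at relative order $O(\delta)$; since the non-trivial Floquet exponents of the linear ODE \eqref{eq:new_CapApprox} scale as $O(\sqrt\delta)$, this translates into an absolute $O(\delta^{3/2})$ correction on $\omega$. Making this rigorous requires perturbation bounds for the monodromy operator of \eqref{eq:new_CapApprox} that are uniform as $\delta\to 0$, which is the delicate technical point of the argument.
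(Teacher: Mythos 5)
Your proposal is correct and follows essentially the same route as the paper's proof: a piecewise leading-order description of the field (constant $c_i$ on each resonator, the capacitance basis functions $V_j$ in the gaps, outgoing waves in the two unbounded regions contributing the $\frac{1}{v_0}D\dot{\mathbf c}$ term), combined with integration of the interior equation and the $\delta$-scaled transmission conditions to arrive at the identity $I_{\partial D_i}[u^{\mathrm{sc}}]=(C\mathbf c)_i+\frac{1}{v_0}(D\dot{\mathbf c})_i=-\frac{\ell_i\rho_{\mathrm r}}{\delta\kappa_{\mathrm r}}\frac{\mathrm d}{\mathrm dt}\bigl(\frac{1}{\kappa_i}\dot c_i\bigr)$. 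The only cosmetic difference is that the paper carries out the argument mode-by-mode on the Fourier coefficients $v_n^{\mathrm{sc}}$ and resums into the time domain at the end, whereas you work directly in the time domain; your closing remarks on the error accounting are in fact more explicit than the paper's own treatment.
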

\begin{rem}
    The above ordinary differential equation can be reformulated as
    \begin{equation*}
        \begin{bmatrix}
        0 & I_N\\
        \nm
        C & \ds \frac{\rho_{\mathrm{r}}}{\delta \kappa_{\mathrm{r}}}L\frac{\mathrm{d}}{\mathrm{d}t}K(t)
        \end{bmatrix} \begin{bmatrix}
            \mathbf{c}(t)\\ 
            \nm
           \ds \frac{\mathrm{d}}{\mathrm{d}t}\mathbf{c}(t)
        \end{bmatrix}
        =  \begin{bmatrix}
        I_N & 0\\
        \nm
        \ds -\frac{1}{v_0}D & \ds -\frac{\rho_{\mathrm{r}}}{\delta \kappa_{\mathrm{r}}}LK(t)
        \end{bmatrix} \begin{bmatrix}
           \ds  \frac{\mathrm{d}}{\mathrm{d}t} \mathbf{c}(t)\\ 
            \nm \ds \frac{\mathrm{d}^2}{\mathrm{d}t^2}\mathbf{c}(t)
        \end{bmatrix},
    \end{equation*}
    where $I_N$ is the $N\times N$ identity matrix.
\end{rem}
\begin{proof}    
    From \cite{jinghao_liora}, we start by claiming that for all $i=1,\dots,N,$ 
    \begin{equation}
    \label{eq:constantvn}
        v_n^{\mathrm{sc}}(x)\big\vert_{[x_i^-,x_i^+]} = c_{i,n} +O(\delta^{1-\gamma}).
    \end{equation}
    We know that
    \begin{equation*}
        v_n^{\mathrm{sc}}(x)\big\vert_{(x_i^+,x_{i+1}^-)} = \alpha_n^i \mathrm{e}^{\mathrm{i}k^{(n)}x}+\beta_n^i\mathrm{e}^{-\mathrm{i}k^{(n)}x}.
    \end{equation*}
    From \cite{feppon_1d} we know that the function $V_i$ defined by 
    (\ref{def:vi}) obeys
    \begin{equation*}
        V_i(x)\big\vert_{[x_i^+,x_{i+1}^-]} =-\frac{1}{\ell_{i(i+1)}} x +\frac{x_{i+1}^-}{\ell_{i(i+1)}},\ \ V_i(x)\big\vert_{D_i} = 1
    \end{equation*}
    and
    \begin{equation*}
        V_{i+1}(x)\big\vert_{[x_i^+,x_{i+1}^-]} =\frac{1}{\ell_{i(i+1)}} x -\frac{x_i^+}{\ell_{i(i+1)}},\ \  V_{i+1}(x)\big\vert_{D_{i+1}} = 1.
    \end{equation*}
    Moreover, from \cite{feppon_1d}, in the intervals $[x_i^+,x_{i+1}^-]$ we have
    \begin{equation*}
        v_n^{\mathrm{sc}}(x)\big\vert_{[x_i^+,x_{i+1}^-]} = c_{i,n} V_i(x) + c_{i+1,n}V_{i+1}(x)+O(\delta).
    \end{equation*}
    In the exterior intervals $(-\infty,x_1^-]$ and $[x_N^+,\infty)$, we use the radiation and  continuity conditions to find that
    \begin{align}\label{eq:vn_N_simpl}
        v_n^{\mathrm{sc}}(x)=\begin{cases}
            c_{1,n}\mathrm{e}^{\mathrm{i}k^{(n)}(x_1^--x)},&\forall\,x\in(-\infty,x_1^-],\\
            c_{N,n}\mathrm{e}^{\mathrm{i}k^{(n)}(x-x_N^+)},&\forall\,x\in [x_N^+,\infty).        \end{cases}
    \end{align}
    Combining these expressions over the whole domain $x\in \R$, we obtain the following expression, up to an error $O\left(\delta\right)$,
    \begin{equation*}
         v_n^{\mathrm{sc}}(x) = \chi_{(x_1^-,x_N^+)}(x)\sum_{j=1 }^{N} c_{j,n}V_j(x)+c_{1,n}\chi_{(-\infty,x_1^-]}(x)\mathrm{e}^{\mathrm{i}k^{(n)}(x_1^--x)}+c_{N,n}\chi_{[x_N^+,\infty)}(x)\mathrm{e}^{\mathrm{i}k^{(n)}(x-x_N^+)},
    \end{equation*}
    where $\chi_A$ is the characteristic function over the set $A$. Applying the operator $I_{\partial D_i}$ defined by (\ref{def:i}), we have
    \begin{equation}
    \label{eq:I_vn_4}
    \begin{split}
        I_{\partial D_i}[v_n^{\mathrm{sc}}] & = \sum_{j=1}^N c_{j,n} C_{ij} -\mathrm{i}k^{(n)}\left( \delta_{i,1}c_{1,n} + \delta_{i,N}c_{N,n}\right)
    \\ &= \frac{\rho_{\mathrm{r}}(\omega+n \Omega)^2}{\delta \kappa_{\mathrm{r}}} \int_{x_i^{-}}^{x_i^{+}} v_{i, n}^{* *}(x)\, \mathrm{d} x.
    \end{split}
    \end{equation}
    Recall that $k^{(n)} = \frac{\omega +n\Omega}{v_0}$ and $v_{i, n}^{* *}(x)=\frac{1}{\omega+n \Omega} \sum_{m=-\infty}^{\infty} k_{i, m}(\omega+(n-m) \Omega) v_{n-m}(x)$. Hence, equation (\ref{eq:I_vn_4}) can be written as
    \begin{equation}\label{eq:pre_caphot}
        \sum_{j=1}^N C_{ij} c_{j,n} -\mathrm{i}k^{(n)}\left( \delta_{i,1}c_{1,n} + \delta_{i,N}c_{N,n}\right) = \frac{\ell_{i}\rho_{\mathrm{r}} (\omega+n\Omega)}{\delta \kappa_{\mathrm{r}}}  \sum_{m=-M}^M k_{i,m} (\omega+(n-m)\Omega)c_{i,n-m},
    \end{equation}
    where $C\in\mathbb{R}^{N\times N}$ is the capacitance matrix introduced in \eqref{def:cap}. Now, define 
    \begin{equation*}
        c_i(t) = \sum_{n=-\infty}^\infty c_{i,n} \mathrm{e}^{-\mathrm{i}(\omega + n\Omega )t}.
    \end{equation*}
    We can compute
    \begin{equation*}
        \begin{split}
            \frac{\mathrm{d}}{\mathrm{d}t}c_i(t) & = -\sum_{n=-\infty}^\infty \mathrm{i}(\omega+n\Omega) c_{i,n}\mathrm{e}^{-\mathrm{i}(\omega+n\Omega )t}, \\
            \frac{\mathrm{d}^2}{\mathrm{d}t^2}c_i(t) & = -\sum_{n=-\infty}^\infty (\omega+n\Omega)^2 c_{i,n}\mathrm{e}^{-\mathrm{i}(\omega+n\Omega )t}.
        \end{split}
    \end{equation*}
    Furthermore, we have
    \begin{equation*}
    \begin{split}
        \frac{\mathrm{d}}{\mathrm{d}t}\left(\frac{1}{\kappa_i(t)}\frac{\mathrm{d}c_i(t)}{\mathrm{d}t}\right) & = \frac{\mathrm{d}}{\mathrm{d}t}\left(\left(\sum_{m=-M}^M k_{i,m}\mathrm{e}^{-\mathrm{i}m\Omega t} \right)\left(-\sum_{n=-\infty} ^\infty \mathrm{i}(\omega+n\Omega) c_{i,n}\mathrm{e}^{-\mathrm{i}(\omega+n\Omega )t}\right)\right)\\
        & = -\frac{\mathrm{d}}{\mathrm{d}t}\sum_{n=-\infty}^\infty \sum_{m=-M}^M \mathrm{i} k_{{i},m} (\omega+n\Omega) c_{i,n} \mathrm{e}^{-\mathrm{i}(\omega+(m+n)\Omega)t} \\
        & = -\sum_{n=-\infty}^\infty \sum_{m=-M}^M k_{{i},m} (\omega+n\Omega)(\omega+(n+m)\Omega) c_{i,n} \mathrm{e}^{-\mathrm{i}(\omega+(n+m)\Omega)t}\\ 
        & = -(\omega+n\Omega)\sum_{n=-\infty}^\infty \sum_{m=-M}^M k_{{i},m} (\omega+(n-m)\Omega)c_{i,n-m} \mathrm{e}^{-\mathrm{i}(\omega+n\Omega)t}.
    \end{split}
    \end{equation*}
    Thus, (\ref{eq:pre_caphot}) implies the desired ordinary differential equation for $\mathbf{c}(t):=\begin{bmatrix}c_i(t)\end{bmatrix}_{i=1}^N$. 
\end{proof}

For a single resonator of length $\ell$, \textit{i.e.} $N=1$, the capacitance matrix is scalar-valued and vanishes: $C=0$. Equation \eqref{eq:new_CapApprox} thereby reduces to
\begin{align}\label{eq:cap_1D_N1}
    \frac{2}{v_0}\frac{\mathrm{d}}{\mathrm{d}t}c(t)+\frac{\ell}{\delta(v_{\mathrm{r}})^2}\frac{\mathrm{d}}{\mathrm{d}t}\frac{1}{\kappa(t)}\frac{\mathrm{d}}{\mathrm{d}t}c(t)=0.
\end{align}
The ordinary differential equation \eqref{eq:cap_1D_N1} admits a closed-form solution. Specifically, we have the following result.
\begin{cor}\label{cor:N1}
    For $N=1$ and as $\delta\to 0$ the non-zero subwavelength resonant quasifrequency $\omega_1$ of (\ref{eq:1DL_system}) is given by 
    \begin{equation*}
        \omega_1 = \frac{\mathcal{K}}{T}\int_0^T {\kappa(t)} \,\mathrm{d}t+O(\delta^{3/2}),
    \end{equation*}
    where we recall that $T=2\pi/\Omega$ and introduce the imaginary constant
    \begin{align*}
        \mathcal{K} := -\frac{2\mathrm{i}(v_{\mathrm{r}})^2\delta}{\ell v_0}.
    \end{align*}
\end{cor}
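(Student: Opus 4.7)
The approach is to exploit the fact that the scalar ODE \eqref{eq:cap_1D_N1} is already in divergence form and therefore integrates once in $t$ to a first-order linear equation. Concretely, there exists a constant of integration $A\in\mathbb{C}$ such that
\begin{equation*}
    \frac{2}{v_0}\,c(t) + \frac{\ell}{\delta v_{\mathrm{r}}^2}\,\frac{1}{\kappa(t)}\,\frac{\mathrm{d}c}{\mathrm{d}t} = A,
\end{equation*}
a scalar first-order linear ODE with $T$-periodic coefficients. The two-parameter family of solutions (parametrised by $A$ together with an initial value) accounts for both subwavelength resonant quasifrequencies of the single-resonator system.

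Next, I would identify two independent Floquet solutions. Setting $\mathrm{d}c/\mathrm{d}t = 0$ gives the constant solution $c(t) \equiv A v_0 / 2$, whose Floquet multiplier is $1$; this reproduces the known mode $\omega_0 = 0$. The complementary choice $A = 0$ reduces the equation to the separable ODE $\mathrm{d}c/\mathrm{d}t = -\tfrac{2\delta v_{\mathrm{r}}^2}{\ell v_0}\,\kappa(t)\,c$, which integrates explicitly to
\begin{equation*}
    c(t) = c(0)\,\exp\!\left(-\frac{2\delta v_{\mathrm{r}}^2}{\ell v_0}\int_0^t \kappa(s)\,\mathrm{d}s\right).
\end{equation*}

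Finally, I would impose the Floquet/Bloch condition. Since $c(t) = \sum_n c_n\, \mathrm{e}^{-\mathrm{i}(\omega+n\Omega)t}$ and $\mathrm{e}^{-\mathrm{i}n\Omega T} = 1$, the function $c$ satisfies $c(t+T) = \mathrm{e}^{-\mathrm{i}\omega T} c(t)$. Evaluating the above explicit solution at $t = T$ and taking the principal branch of the logarithm (so that $\Re\omega \in [-\Omega/2,\Omega/2)$) gives
\begin{equation*}
    -\mathrm{i}\omega T = -\frac{2\delta v_{\mathrm{r}}^2}{\ell v_0}\int_0^T \kappa(s)\,\mathrm{d}s,
\end{equation*}
and using $\mathcal{K} = -2\mathrm{i}\delta v_{\mathrm{r}}^2 /(\ell v_0)$ this rearranges to the claimed identity $\omega_1 = \tfrac{\mathcal{K}}{T}\int_0^T \kappa(t)\,\mathrm{d}t$. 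The $O(\delta^{3/2})$ error term is inherited directly from Theorem \ref{thm3.2}, which supplied the capacitance ODE in the first place.

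There is no substantial technical obstacle here: once the divergence-form observation is made, the corollary reduces to an explicit Floquet analysis of a scalar linear ODE. The only points that require care are the sign convention in the Floquet multiplier, the choice of branch in the logarithm so that $\omega_1$ lands in the correct fundamental domain modulo $\Omega$, and checking that integrating once in $t$ does not degrade the $O(\delta^{3/2})$ error budget of Theorem \ref{thm3.2}.
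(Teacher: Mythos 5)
Your proposal is correct and takes essentially the same route as the paper: the paper substitutes $u = \mathrm{d}c/\mathrm{d}t$ to reduce \eqref{eq:cap_1D_N1} to a first-order scalar Floquet problem whose exponent is the period-average of its coefficient (with the $\kappa'/\kappa$-type term dropping out upon integration over a period), while you integrate the divergence form once in $t$ and isolate the non-constant mode by setting the integration constant $A=0$ --- an equivalent computation yielding the same exponent. The only step worth making explicit is why the non-trivial Floquet mode must have $A=0$: the constant $A = \tfrac{2}{v_0}c + \tfrac{\ell}{\delta v_{\mathrm{r}}^2}\tfrac{1}{\kappa}\tfrac{\mathrm{d}c}{\mathrm{d}t}$ is invariant under the period map, so a Floquet multiplier different from $1$ forces $A=0$; this is immediate.
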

\begin{proof}
    By reformulating \eqref{eq:cap_1D_N1}, we obtain
    \begin{equation} \label{eq:ct}
        -a(t) \frac{\mathrm{d}}{\mathrm{d}t} c(t) +\frac{\mathrm{d}^2}{\mathrm{d}t^2} c(t) = 0
    \end{equation}
    with
    \begin{equation}
        a(t) := \left(\mathrm{i}\mathcal{K} + \frac{\mathrm{d}}{\mathrm{d}t} \frac{1}{\kappa(t)}\right) \kappa(t).
    \end{equation}
    For the Floquet exponent $\omega_1$ of the non-constant solution $c(t)$ of (\ref{eq:ct}) the following holds:
    \begin{equation*}
        \mathrm{e}^{\mathrm{i}\omega_1 T}= \ds \mathrm{e}^{\int_0^T a(t)\, \mathrm{d}t}.
    \end{equation*}
    Since $\kappa(t)$ is $T$-periodic, the second term of $a(t)$ vanishes in the integral over $(0,T)$ and we obtain
    \begin{equation*}
        \int_0^T a(t)\,\mathrm{d}t = \mathrm{i}\mathcal{K} \int_0^T \kappa(t) \,\mathrm{d}t. 
    \end{equation*}
    This concludes the proof.
\end{proof}
\begin{rem}
    In the case when 
    \begin{equation}
        \kappa(t) = \frac{1}{1+\varepsilon_{\kappa} \mathrm{cos}(\Omega t)},
    \end{equation}
    we can compute the integral explicitly. Namely, we have for $0 \leq \varepsilon_{\kappa} < 1$:
    \begin{equation*}
        \int_0^T \frac{1}{1+\varepsilon_{\kappa} \mathrm{cos}(\Omega t)} \,\mathrm{d}t = \frac{1}{\Omega}\int_0^{2\pi} \frac{1}{1+\varepsilon_{\kappa} \mathrm{cos}(x)}\,\mathrm{d}x= \frac{T}{\sqrt{1-(\varepsilon_{\kappa})^2}}.
    \end{equation*}
    Here we have used that, for $0\leq a<1$,
    \begin{equation*}
        \int_0^{2\pi} \frac{1}{1+a \mathrm{cos}(x)}\,\mathrm{d}x  = \frac{2\pi}{\sqrt{1-a^2}}.
    \end{equation*}
    To conclude, the non-zero resonant quasifrequency of a single resonator is given by \begin{equation}\label{eq:om1_Cepsk}
        \omega_1(\delta)=\frac{\mathcal{K}}{\sqrt{1-(\varepsilon_{\kappa})^2}}+O(\delta^{3/2}).
    \end{equation}

\end{rem}
    
We note that assuming $\kappa(t)=1$, \Cref{cor:N1} yields (\ref{eq:om1_formula}). Next, we verify numerically that in the single resonator case, the system attains a zero resonance $\omega_0(\delta)=0$ and a purely imaginary resonance $\omega_1(\delta)\in\mathrm{i}\mathbb{R}$ which, as shown in  Figure \ref{fig:w_N1_timemod}, decreases for increasing $\varepsilon_{\kappa}$. 
\begin{figure}[H]
    \centering
    \includegraphics[width=0.68\textwidth]{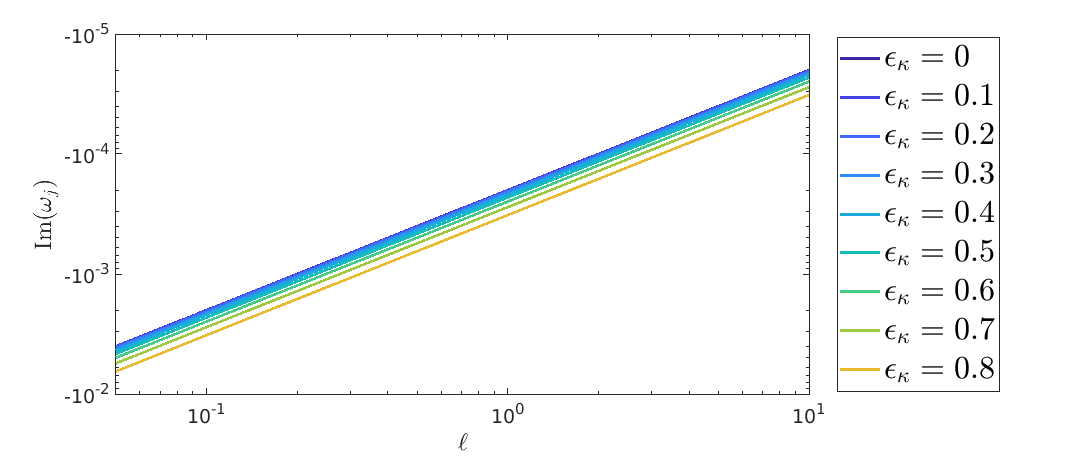}
    \caption{The imaginary parts of the non-zero quasifrequency $\omega_1$ for fixed $\delta=0.0001,\,v_{\mathrm{r}}=v_0=1,\,K=4,\,\Omega=0.03$ and time-modulated material parameters, as a function of $\ell$. Note that the quasifrequencies are purely imaginary. We use logarithmic axes to present our numerical results.}\label{fig:w_N1_timemod}
\end{figure}
The non-zero subwavelength quasifrequency $\omega_1\in\mathrm{i}\mathbb{R}$ shown in Figure \ref{fig:w_N1_timemod} was obtained with the capacitance matrix approximation derived in Theorem \ref{thm3.2}. It becomes apparent that for increasing $\ell$ the non-zero subwavelength quasifrequency $\omega_1(\delta)$ decreases, specifically,
\begin{align*}
    \omega_1(\delta)=O\left(\frac{1}{\ell}\right).
\end{align*} 
Furthermore, for increasing $\varepsilon_{\kappa}$ the non-zero subwavelength quasifrequency $\omega_1(\delta)$ decreases; see Figure \ref{fig:regimes_N1}. The left plot in Figure \ref{fig:regimes_N1} also serves as a numerical evidence of (\ref{eq:om1_Cepsk}).
\begin{figure}[H]
    \centering
    \includegraphics[width=\textwidth]{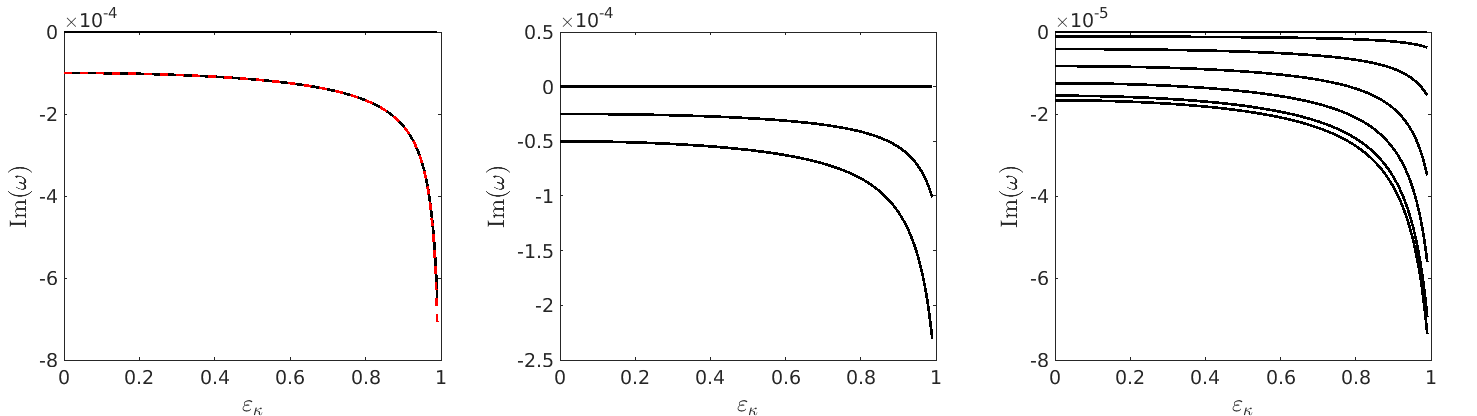}
    \caption{The imaginary parts of the subwavelength resonant quasifrequencies as a function of $\varepsilon_{\kappa}$ for $N\in\{1,2,6\}$ resonators each of length $\ell=2$ and spacing $\ell_{i(i+1)}=10$ at time $t=0$ with fixed $\delta=0.0001,\,v_{\mathrm{r}}=v_0=1,\,K=4,\,\Omega=0.03$. The red dashed line in the left plot shows the approximation computed via (\ref{eq:om1_Cepsk}).}\label{fig:regimes_N1}
\end{figure}
\section{Wave scattering from a system of time-modulated subwavelength resonators} \label{sec:chpt4}
We shall now derive a formulation of the scattered wave field upon a general incident wave field $u^{\inc}$. Firstly, we pose a system of linear equations defining the interior solution coefficients $\{(a_n^i,b_n^i)\}_{n\in \Z}$, for all $i=1,\dots,N$.
\begin{lemma}\label{lemma:AwF}
    The transmission condition in (\ref{eq:1DL_system}) can be expressed in terms of the following system of linear equations in $\mathbf{w}$:
    \begin{equation}\label{eq:lin_system}
        \mathcal{A}(\omega,\delta)\mathbf{w}=\delta \left(\mathbf{\mathcal{F}\left(\omega\right)}+\mathcal{N}\left(\omega\right)\mathbf{v}^{\inc}\right),
    \end{equation}
    where the doubly-infinite vector $\mathbf{\mathcal{F}\left(\omega\right)}$ is defined as
    \begin{equation*}
        \mathbf{\mathcal{F}}:= \left[\begin{smallmatrix}\svdots\\
            \mathbf{F}^{K}\\\svdots\\\mathbf{F}^{-K}\\\svdots
        \end{smallmatrix}\right], \quad 
        \mathbf{F}^j:=\begin{bmatrix}{F}^j_i\end{bmatrix}_{i=1,\dots,2N}, \; 
        {F}^j_i:=\begin{cases}
            \left.-\frac{\mathrm{d}v_j^{\inc}}{\mathrm{d}x}\right|_-(x_i^-),&i\,\, \text{odd},\\
            \left.\frac{\mathrm{d}v_j^{\inc}}{\mathrm{d}x}\right|_+(x_i^+),&i\,\, \text{even},
        \end{cases}
    \end{equation*}
    the doubly-infinite matrix $\mathcal{N}\left(\omega\right)$ is defined as
    \begin{align*}
        \mathcal{N}:=\mathrm{diag}\left(\mathcal{N}^{k^{(n)}}\right)_{n\in \Z},&\quad \mathcal{N}^{k^{(n)}}:=\left[\begin{smallmatrix}
            -\mathrm{i}k^{(n)}&&&&\\
            &N^{k^{(n)}}(\ell_{12})&&&\\
            &&\sddots&&\\
            &&&N^{k^{(n)}}(\ell_{(N-1)N})&\\
            &&&&-\mathrm{i}k^{(n)}
        \end{smallmatrix}\right],\\
        &N^{k^{(n)}}(\ell):=\begin{bmatrix}
            \frac{k^{(n)}\cos\left(k^{(n)}\ell\right)}{\sin\left(k^{(n)}\ell\right)} & \frac{-k^{(n)}}{\sin\left(k^{(n)}\ell\right)}\\
            \frac{k^{(n)}}{\sin\left(k^{(n)}\ell\right)} & \frac{-k^{(n)}\cos\left(k^{(n)}\ell\right)}{\sin\left(k^{(n)}\ell\right)}
        \end{bmatrix},
    \end{align*}
    and the doubly-infinite vector $\mathbf{v}^{in}$ is defined as
    \begin{align*}
        \mathbf{v}^{\inc}:=\left[\begin{smallmatrix} \svdots \\
            \mathbf{v}^{\inc}_K\\\svdots\\\mathbf{v}^{\inc}_{-K} \\ \svdots
        \end{smallmatrix}\right], \quad \mathbf{v}^{\inc}_j:=\begin{bmatrix}
            \left.v^{\inc}_n\right|_-(x_i^-)\\\left.v^{\inc}_n\right|_+(x_i^+)
        \end{bmatrix}_{1\leq i\leq N}.
    \end{align*}
    The doubly-infinite matrix $\mathcal{A}(\omega,\delta)$ is defined as in \cite[Theorem III.4]{jinghao_liora}. The unknown in (\ref{eq:lin_system}) is the vector of the coefficients corresponding to the interior solution, namely,
    \begin{equation*}
        \mathbf{w}:=\begin{bmatrix}\mathbf{w}_n\end{bmatrix}_{n\in \Z},\,\mathbf{w}_n:=\begin{bmatrix}a_n^i\\b_n^i\end{bmatrix}_{1\leq i\leq N}.
    \end{equation*}
\end{lemma}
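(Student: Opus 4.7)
The plan is to reduce the transmission conditions in \eqref{eq:1DL_system} to an algebraic system in the interior coefficients $\mathbf{w}$ by parameterising the exterior scattered field and expressing its boundary derivatives as linear functionals of the interior boundary data and the incident wave.

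First, I would parameterise the exterior field. In each bounded gap $(x_i^+,x_{i+1}^-)$ the ansatz \eqref{eq:vi_ansatz_exterior} gives $v^{\mathrm{sc}}_n(x) = \alpha_n^i\mathrm{e}^{\mathrm{i}k^{(n)}x} + \beta_n^i\mathrm{e}^{-\mathrm{i}k^{(n)}x}$, while in the two unbounded intervals the outgoing condition forces the reduced form \eqref{eq:vn_N_simpl}. Thus each exterior Fourier mode is determined by two scalars per gap and one scalar per unbounded region.

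Next, I would build the exterior Dirichlet-to-Neumann map. Given the two boundary values $v^{\mathrm{sc}}_n|_+(x_i^+)$ and $v^{\mathrm{sc}}_n|_-(x_{i+1}^-)$ in a gap of length $\ell_{i(i+1)}$, I solve the $2\times 2$ linear system for $(\alpha_n^i,\beta_n^i)$ by Cramer's rule and substitute back to obtain the normal derivatives at $x_i^+$ and $x_{i+1}^-$; the resulting matrix is precisely $N^{k^{(n)}}(\ell_{i(i+1)})$, with the $\sin(k^{(n)}\ell)$ in the denominator arising from the determinant of the $2\times 2$ Vandermonde-type system and the $\cos(k^{(n)}\ell)$ entries from the on-diagonal substitutions. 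In each unbounded interval the same reduction collapses to multiplication by $-\mathrm{i}k^{(n)}$, giving the corner entries of $\mathcal{N}^{k^{(n)}}$.

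Then, I would combine the two function-value jumps in \eqref{eq:1DL_system} to express the exterior boundary values as the interior boundary values (which depend linearly on $\mathbf{w}$ via \eqref{eq:vi_ansatz_K}) shifted by the corresponding traces $v^{\inc}_n|_\pm(x_i^\mp)$. Substituting this into the $N^{k^{(n)}}$-representation of the exterior normal derivatives produces, for each boundary point, a sum of two contributions: one linear in the interior traces and one linear in the incident-wave traces collected in $\mathbf{v}^{\inc}$. Plugging this into the derivative-jump transmission conditions (the fifth and sixth lines of \eqref{eq:1DL_system}), where the exterior derivative is set equal to $\delta$ times the interior derivative plus the explicit $\delta\,\partial_x v^{\inc}_n$ source, yields the desired equation: the interior-derivative contribution assembles into $\mathcal{A}(\omega,\delta)\mathbf{w}$ as in \cite[Theorem III.4]{jinghao_liora}, the explicit incident source becomes $\delta\mathcal{F}(\omega)$ (the signs $\mp\mathrm{d}v^{\inc}_j/\mathrm{d}x$ in ${F}^j_i$ arising from the orientation of $I_{\p D_i}$ at $x_i^\pm$), and the $N^{k^{(n)}}$-induced traces yield $\delta\mathcal{N}(\omega)\mathbf{v}^{\inc}$.

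The main obstacle is bookkeeping rather than analysis: tracking the $|_\pm$ side-labels consistently across all $2N$ boundary points, aligning the block ordering of $\mathcal{N}^{k^{(n)}}$ with the odd/even convention used in ${F}^j_i$ and $\mathbf{v}^{\inc}_j$, and verifying that the $-\mathrm{i}k^{(n)}$ corners correctly replace the $2\times 2$ block one would naively write down for a semi-infinite interval. Once the sign and indexing conventions are fixed, every substitution is a direct linear-algebra manipulation.
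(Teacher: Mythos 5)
Your proposal is correct and follows essentially the same route as the paper, which for this lemma simply defers to the proof of \cite[Theorem III.4]{jinghao_liora}: there the system is likewise assembled by eliminating the exterior coefficients through the gap-wise Dirichlet-to-Neumann matrices $N^{k^{(n)}}(\ell)$ (with $-\mathrm{i}k^{(n)}$ on the two unbounded intervals), shifting the exterior traces by the incident data via the value-jump conditions, and substituting into the $\delta$-weighted derivative-jump conditions. The only caveat is the bookkeeping you already flag, namely the side-labels and the odd/even ordering of the $2N$ boundary points; these are convention choices fixed in the cited theorem rather than mathematical obstacles.
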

\begin{proof}
    See the proof of \cite[Theorem III.4]{jinghao_liora}.
\end{proof}\par 

\begin{rem}
As pointed out in \Cref{rmk:K1}, we can numerically discretise \eqref{eq:lin_system} by truncating the Fourier series to length $K$, whereby the matrices and vectors become finite: 
$\mathcal{F}, \mathbf{v}^{\inc} \in\mathbb{C}^{2N(2K+1)}$, and $\mathcal{N}, \mathcal{A}\in\mathbb{C}^{2N(2K+1)\times2N(2K+1)}$.
\end{rem}
In contrast to the capacitance matrix formulation of \Cref{thm3.2}, which proved an asymptotic approximation of the subwavelength resonances $\omega_i$ in the limit $\delta\to 0$, these resonances can alternatively be approximated by finding the roots of the operator $\mathcal{A}(\omega,\delta)$ truncated at given $K$ large enough \cite{jinghao_liora}.  \par 
We now turn to the scattering problem where $u^\mathrm{in}(x,t)$ is non-zero. We denote the vector of the coefficients corresponding to the exterior solution by $\mathbf{v}$ and the vector of the coefficients corresponding to the interior solution by $\mathbf{w}$, in particular,
\begin{equation*} \mathbf{v}:=\begin{bmatrix}\mathbf{v}_n\end{bmatrix}_{n\in\mathbb{Z}},\,\mathbf{v}_n:=\begin{bmatrix}\alpha_n^i\\\beta_n^i\end{bmatrix}_{1\leq i\leq N},\quad \mathbf{w}:=\begin{bmatrix}\mathbf{w}_n\end{bmatrix}_{n\in\mathbb{Z}},\,\mathbf{w}_n:=\begin{bmatrix}a_n^i\\b_n^i\end{bmatrix}_{1\leq i\leq N}.
\end{equation*}
Lemma 2.1 in \cite{feppon_1d} provides a characterisation of the exterior coefficients for the case without an incident wave. Analogously to \cite[Lemma 2.1]{feppon_1d} we derive
\begin{align*}
    \begin{bmatrix}
        \alpha_n^i \\ \beta_n^i
    \end{bmatrix} = \frac{-1}{2\mathrm{i}\sin\left(k^{(n)}\ell_{i(i+1)}\right)}\begin{bmatrix}
        \mathrm{e}^{-\mathrm{i}k^{(n)}x_{i+1}^-} & -\mathrm{e}^{-\mathrm{i}k^{(n)}x_i^+} \\ -\mathrm{e}^{\mathrm{i}k^{(n)}x_{i+1}^-} & \mathrm{e}^{\mathrm{i}k^{(n)}x_i^+}
    \end{bmatrix}\begin{bmatrix}
        v^{\mathrm{sc}}_n(x_i^+) -v^{\inc}_n|_+(x_i^+)\\ v^{\mathrm{sc}}_n(x_{i+1}^-)-v^{\inc}_n|_-(x_{i+1}^-)
    \end{bmatrix}.
\end{align*}
Thus, we may introduce an operator $\mathcal{M}$ such that $\mathbf{v}=\mathcal{M}\mathbf{w}$. We define the operator $\mathcal{S}_n$ in $\mathbf{v}$ through
\begin{equation}\label{eq:Snv(x)}
    \mathcal{S}_n(\mathbf{v})(x):=\begin{cases} 
    \beta_n^0\mathrm{e}^{-\mathrm{i}k^{(n)}x}, & x \in (-\infty,x_1^-], \\
    \alpha_n^i\mathrm{e}^{\mathrm{i}k^{(n)}x}+\beta_n^i\mathrm{e}^{-\mathrm{i}k^{(n)}x}, & x \in [x_i^+,x_{i+1}^-], \\
    \sum\limits_{j=-\infty}^{\infty}\left(a_j^i\mathrm{e}^{\mathrm{i}\lambda_j^ix}+b_j^i\mathrm{e}^{-\mathrm{i}\lambda_j^ix}\right)f_n^{j,i}, & x\in(x_i^-,x_i^+), \\
    \alpha_n^N\mathrm{e}^{\mathrm{i}k^{(n)}x}, & x\in[x_N^+,+\infty).
    \end{cases}
\end{equation}
The coefficients $\alpha_n^N$ and $\beta_n^0$ are defined by
\begin{align}
    &\alpha_n^N:=\mathrm{e}^{-\mathrm{i}k^{(n)}x_N^+}\left(\sum\limits_{j=-\infty}^{\infty}\left(a_j^N\mathrm{e}^{\mathrm{i}\lambda_j^Nx_N^+}+b_j^N\mathrm{e}^{-\mathrm{i}\lambda_j^Nx_N^+}\right)f_n^{j,N}-v^{\inc}_n|_+(x_N^+)\right),\label{def:alpha_nN}\\ 
    &\beta_n^0:=\mathrm{e}^{\mathrm{i}k^{(n)}x_1^-}\left(\sum\limits_{j=-\infty}^{\infty}\left(a_j^1\mathrm{e}^{\mathrm{i}\lambda_j^1x_1^-}+b_j^1\mathrm{e}^{-\mathrm{i}\lambda_j^1x_1^-}\right)f_n^{j,1}-v_n^{\inc}|_-(x_1^-)\right).\label{def:beta_n0}
\end{align}
Inserting the definition of the operator $\mathcal{S}_n$ into the formula (\ref{def:us_Fourier}) defining the scattered wave field, we obtain
\begin{equation*}
    u^\mathrm{sc}(x,t)=\sum\limits_{j=1}^{N}\sum\limits_{n=-\infty}^{\infty}\mathcal{S}_n\left(\mathcal{M}\mathbf{w}\right)(x)\,\mathrm{e}^{-\mathrm{i}(\omega_j+n\Omega)t}.
\end{equation*}
Next, we note that by the definition of $\mathbf{w}$, it follows that
\begin{align*}
    \mathbf{w}=\left(\mathcal{A}(\omega,\delta)\right)^{-1}\left(\delta\mathbf{\mathcal{F}}+\delta\mathcal{N}\mathbf{v}^{\inc}\right).
\end{align*}
Assume that the operating frequency $\omega$ is close to the set of subwavelength quasifrequencies $\{ \omega_j\}_{j=1,\ldots N}$. Formally, we can then follow \cite[Proposition 4]{jde} and write in a small neighbourhood of the $\omega_j$'s
\begin{equation} \label{opexp}
    \left(\mathcal{A}(\omega,\delta)\right)^{-1}= \sum_{j=1}^N \frac{\mathcal{L}_j}{\omega-\omega_j}+\mathcal{R}(\omega),
\end{equation}
where \cite[Section 4.4]{jde}
\begin{equation*}
    \mathcal{L}_j=\frac{\langle\mathbf{w}^*,\cdot\rangle\mathbf{w}}{\big\langle\mathbf{w}^*,\frac{\mathrm{d}\mathcal{A}(\omega_j,\delta)}{\mathrm{d}\omega}\mathbf{w}\big\rangle},
\end{equation*}
and $\mathcal{R}(\omega)$ is a remainder term, which is holomorphic in $\omega$. The following proposition holds. 
\begin{prop} \label{propapprox}
    Neglecting the remainder $\mathcal{R}(\omega)$ in (\ref{opexp}),  we have the following approximation formula for the scattered field when the incident frequency $\omega$ is close to the set of subwavelength quasifrequencies 
    $\{\omega_j\}_{j=1,\dots,N}$, 
    \begin{align} \label{scatgeneral}
        u^\s(x,t) \approx \underbrace{\sum\limits_{j=1}^{N}\sum\limits_{n=-\infty}^{\infty}\mathcal{S}_n\left(\mathcal{M}\left(\frac{\mathcal{L}_j}{\omega-\omega_j}\left(\delta\mathbf{\mathcal{F}}+\delta\mathcal{N}\mathbf{v}^{\inc}\right)\right)\right)(x)\,\mathrm{e}^{-\mathrm{i}(\omega_j+n\Omega)t}}_{\ds=:\sum\limits^N_{j=1}\sum\limits^{\infty}_{n=-\infty}\frac{\gamma_n^j(x)}{\omega-\omega_j}\mathrm{e}^{-\mathrm{i}(\omega_j+n\Omega)t}},
    \end{align}
    where  $\gamma_n^j(x)$ are given by
    \begin{equation}\label{def:sc_1D}        
        \gamma_n^j(x):=\mathcal{S}_n\left(\mathcal{M}\left(\mathcal{L}_j\left(\delta\mathbf{\mathcal{F}}+\delta\mathcal{N}\mathbf{v}^{\inc}\right)\right)\right)(x).
    \end{equation}
\end{prop}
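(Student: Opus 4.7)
The proof is essentially a substitution argument, so the plan is straightforward. I will start from the representation
\[
u^{\s}(x,t) = \sum_{j=1}^{N}\sum_{n=-\infty}^{\infty}\mathcal{S}_n(\mathcal{M}\mathbf{w})(x)\,\mathrm{e}^{-\mathrm{i}(\omega_j+n\Omega)t}
\]
derived just before the proposition, and use \Cref{lemma:AwF} to replace $\mathbf{w}$ by the explicit expression
\[
\mathbf{w} = \big(\mathcal{A}(\omega,\delta)\big)^{-1}\big(\delta\mathbf{\mathcal{F}}+\delta\mathcal{N}\mathbf{v}^{\inc}\big).
\]

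Next, I would invoke the meromorphic expansion (\ref{opexp}) of $(\mathcal{A}(\omega,\delta))^{-1}$ near the set of subwavelength quasifrequencies $\{\omega_j\}_{j=1,\dots,N}$. Since $\mathcal{R}(\omega)$ is holomorphic in $\omega$ and hence bounded in a small neighbourhood of any fixed $\omega_j$, while the simple-pole contributions $\mathcal{L}_j/(\omega-\omega_j)$ blow up, the proposition's hypothesis that we neglect $\mathcal{R}(\omega)$ amounts to writing
\[
\mathbf{w} \approx \sum_{j=1}^{N}\frac{\mathcal{L}_j}{\omega-\omega_j}\big(\delta\mathbf{\mathcal{F}}+\delta\mathcal{N}\mathbf{v}^{\inc}\big).
\]
Substituting this into the series expression for $u^{\s}$ and using the linearity of both $\mathcal{M}$ (a bounded operator mapping interior-coefficient vectors to exterior-coefficient vectors) and of $\mathcal{S}_n$ (which acts linearly in the coefficients through the formula (\ref{eq:Snv(x)})) lets me pull the scalar factors $1/(\omega-\omega_j)$ outside, yielding exactly the claimed formula (\ref{scatgeneral}) after defining $\gamma_n^j(x)$ as in (\ref{def:sc_1D}).

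The only mildly delicate point is the interchange of the sum over $j$ (the quasifrequency index) with the sum over $n$ (the Fourier index), together with the action of $\mathcal{S}_n\circ\mathcal{M}$; but since $N$ is finite and the operators are linear, this is immediate, and the remaining Fourier sum converges in the same sense as in \Cref{rmk:K1} due to the decay $\|v_n\|_2 = o(1/n)$. There is no hard analytic content beyond what \cite[Proposition 4]{jde} already provides for the pole expansion (\ref{opexp}); the proposition is essentially a book-keeping statement assembling the pieces, and I would present it as such in a short paragraph.
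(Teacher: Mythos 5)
Your proposal is correct and follows essentially the same route as the paper, which presents this proposition as a direct assembly of the preceding displayed formulas: substitute $\mathbf{w}=(\mathcal{A}(\omega,\delta))^{-1}(\delta\mathcal{F}+\delta\mathcal{N}\mathbf{v}^{\mathrm{in}})$ into the series for $u^{\mathrm{sc}}$, apply the pole-pencil expansion (\ref{opexp}) while dropping the holomorphic remainder $\mathcal{R}(\omega)$, and use linearity of $\mathcal{M}$ and $\mathcal{S}_n$ to extract the scalar factors $1/(\omega-\omega_j)$. No further comment is needed.
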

Thus, the total wave field is approximated through
\begin{align*}
    u(x,t)\approx u^{\mathrm{in}}(x,t)+\sum\limits^N_{j=1}\sum\limits^{\infty}_{n=-\infty}\frac{\gamma_n^j(x)}{\omega-\omega_j}\mathrm{e}^{-\mathrm{i}(\omega_j+n\Omega)t}.
\end{align*}
Note that so far we have not stated any further assumptions on $u^{\inc}$, except that it must satisfy (\ref{eq:PDE_uin}). Whence, we have not clarified whether the incident wave impinges on the resonators from the left or the right. One might consider now a left incident wave given by
\begin{align}\label{def:uin_l}
    u^{\inc}(x,t):=\begin{cases}
        \theta_1\mathrm{e}^{\mathrm{i}(kx-\omega t)},&x<x_1^-,\\
        0,&\text{otherwise},
    \end{cases}
\end{align}
a right incident wave given by
\begin{align*}
    u^{\inc}(x,t):=\begin{cases}
        \theta_2\mathrm{e}^{\mathrm{i}(-kx-\omega t)},&x>x_N^+,\\
        0,&\text{otherwise},
    \end{cases}
\end{align*}
or a left and right incident wave field given by
\begin{align*}
    u^{\inc}(x,t):=\begin{cases}
        \theta_1\mathrm{e}^{\mathrm{i}(kx-\omega t)},&x<x_1^- ,\\
        0,&x_1^-\leq x\leq x_N^+,\\
        \theta_2\mathrm{e}^{\mathrm{i}(-kx-\omega t)},&x>x_N^+.
    \end{cases}
\end{align*}
The constants $\theta_1$ and $\theta_2$ denote the amplitude of the incident wave field.\par  
In the remainder of this paper we shall consider the case of a left incident wave field, \textit{i.e.} an incident wave given by (\ref{def:uin_l}), with amplitude $\theta_1=1$. Upon substituting (\ref{def:uin_l}) into Lemma \ref{lemma:AwF}, we obtain the following corollary.
\begin{cor}
    We assume that an incident wave field impinges on a collection of $N$ resonators from the left. Then the coefficients
    \begin{align*}
        \mathbf{w}:=\begin{bmatrix}
            \mathbf{w}_n
        \end{bmatrix}_{n\in \Z},\,\mathbf{w}_n:=\begin{bmatrix}a_n^i\\b_n^i\end{bmatrix}_{1\leq i\leq N}
    \end{align*}
    of the interior solution are precisely the solution to the linear system
    \begin{align}\label{eq:AwFleft}
        \mathcal{A}(\omega,\delta)\mathbf{w}=\delta\mathcal{F}^{\mathrm{left}}(\omega),
    \end{align}
    where $\mathcal{F} ^{\mathrm{left}}(\omega)$ is now given by
    \begin{align*}
        \mathcal{F} ^{\mathrm{left}}:=\left[\begin{smallmatrix}\svdots\\
            \mathbf{F}^{\mathrm{left},K}\\\svdots\\\mathbf{F}^{\mathrm{left},-K}\\\svdots
        \end{smallmatrix}\right],\quad F^{\mathrm{left},j}_i:=\begin{cases}
            \left.-\frac{\mathrm{d}}{\mathrm{d}x}v_n^{\inc}\right|_-(x_i^-)-\left.\mathrm{i}k^{(n)}v^{\inc}_n\right|_-(x_i^-), &i=1,\\
            0,&i\,\,\text{otherwise}.
        \end{cases}
    \end{align*}
    The matrix $\mathcal{A}(\omega,\delta)$ is defined as in \cite[Theorem III.4]{jinghao_liora}.
\end{cor}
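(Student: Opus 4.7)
The plan is to obtain this corollary as a direct specialisation of \Cref{lemma:AwF}: take the explicit left incident field from \eqref{def:uin_l}, compute all of the boundary data it produces, plug these into the right-hand side $\delta(\mathcal{F}(\omega)+\mathcal{N}(\omega)\mathbf{v}^{\inc})$, and check that the result collapses to $\delta\mathcal{F}^{\mathrm{left}}(\omega)$.

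First I would expand $u^{\inc}(x,t) = \theta_1 \mathrm{e}^{\mathrm{i}(kx-\omega t)}\chi_{(-\infty,x_1^-)}(x)$ in the Fourier representation \eqref{def:us_Fourier}. Since the time dependence is exactly $\mathrm{e}^{-\mathrm{i}\omega t}$, only the $n=0$ Fourier coefficient is non-trivial and is given, on $x<x_1^-$, by $v^{\inc}_0(x)=\theta_1 \mathrm{e}^{\mathrm{i}kx}$, where $k = \omega/v_0 = k^{(0)}$ since $u^{\inc}$ satisfies \eqref{eq:PDE_uin}; all other Fourier coefficients $v^{\inc}_n$, $n\neq 0$, vanish identically on $x<x_1^-$, and every $v^{\inc}_n$ vanishes on $[x_1^-,+\infty)$.

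Next I would read off the boundary data from these formulas. At $x=x_1^-$ from the left one has $v^{\inc}_0|_-(x_1^-)=\mathrm{e}^{\mathrm{i}kx_1^-}$ and $\tfrac{\mathrm{d}}{\mathrm{d}x}v^{\inc}_0|_-(x_1^-)=\mathrm{i}k\,\mathrm{e}^{\mathrm{i}kx_1^-}$, with $\theta_1=1$; every other boundary value $v^{\inc}_n|_\pm(x_i^\pm)$ and derivative $\tfrac{\mathrm{d}}{\mathrm{d}x}v^{\inc}_n|_\pm(x_i^\pm)$ appearing in \Cref{lemma:AwF} vanishes, either because $n\neq 0$ or because the evaluation point lies in $[x_1^-,\infty)$ where $u^{\inc}\equiv 0$. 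Consequently, in the vector $\mathcal{F}$ only the $(j=0,i=1)$ component survives and equals $-\mathrm{i}k\,\mathrm{e}^{\mathrm{i}kx_1^-}$, while in $\mathcal{N}\mathbf{v}^{\inc}$ the only non-zero entry is produced by the top-left $-\mathrm{i}k^{(n)}$ block of $\mathcal{N}^{k^{(n)}}$ acting on $v^{\inc}_n|_-(x_1^-)$, which again is non-vanishing only for $n=0$ and equals $-\mathrm{i}k\,\mathrm{e}^{\mathrm{i}kx_1^-}$. The interior $2\times 2$ blocks $N^{k^{(n)}}(\ell_{i(i+1)})$ produce zero since both entries $v^{\inc}_n|_+(x_i^+)$ and $v^{\inc}_n|_-(x_{i+1}^-)$ they act on vanish.

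Adding the two contributions at the single surviving position gives exactly $-\tfrac{\mathrm{d}}{\mathrm{d}x}v^{\inc}_n|_-(x_1^-) - \mathrm{i}k^{(n)} v^{\inc}_n|_-(x_1^-)$, which is the definition of $F^{\mathrm{left},j}_1$; every other component is $0$, matching the second line in the definition of $\mathcal{F}^{\mathrm{left}}$. Substituting this into \eqref{eq:lin_system} from \Cref{lemma:AwF} yields \eqref{eq:AwFleft}, and the corollary follows. There is no real obstacle here; the only care needed is bookkeeping of the block structure of $\mathcal{N}^{k^{(n)}}$ and the ordering of $\mathbf{v}^{\inc}_n=[v^{\inc}_n|_-(x_i^-),v^{\inc}_n|_+(x_i^+)]_{1\leq i\leq N}$, to confirm that it is indeed the $-\mathrm{i}k^{(n)}$ top-left corner (and not the $N^{k^{(n)}}(\ell_{12})$ block) that multiplies the lone non-zero boundary value $v^{\inc}_n|_-(x_1^-)$.
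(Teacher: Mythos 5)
Your proposal is correct and is exactly the paper's route: the paper proves this corollary simply by substituting the left incident field \eqref{def:uin_l} into Lemma \ref{lemma:AwF}, observing that only the $n=0$ Fourier mode of $u^{\inc}$ is non-trivial and only its trace at $x_1^-$ from the left survives, so that $\mathcal{F}+\mathcal{N}\mathbf{v}^{\inc}$ collapses to $\mathcal{F}^{\mathrm{left}}$. Your bookkeeping of the block structure of $\mathcal{N}^{k^{(n)}}$ (the lone non-zero boundary value being hit by the top-left $-\mathrm{i}k^{(n)}$ corner) is the only non-trivial check, and you carried it out correctly.
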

The linear system (\ref{eq:AwFleft}) can be solved numerically, for fixed $\delta$ and $\omega$, to obtain the total wave field solving (\ref{eq:1DL_system}) in the subwavelength regime. In Figure \ref{fig:usc_iterepsk}, we show the total wave field at time $t=0$ for various values of the modulation amplitude $\varepsilon_{\kappa}$. It becomes apparent from the numerical results in Figure \ref{fig:usc_iterepsk} that the total wave field's amplitude experiences a damping effect upon increasing $\varepsilon_{\kappa}$.
\begin{figure}[H]
    \centering
    \includegraphics[width=0.7\textwidth]{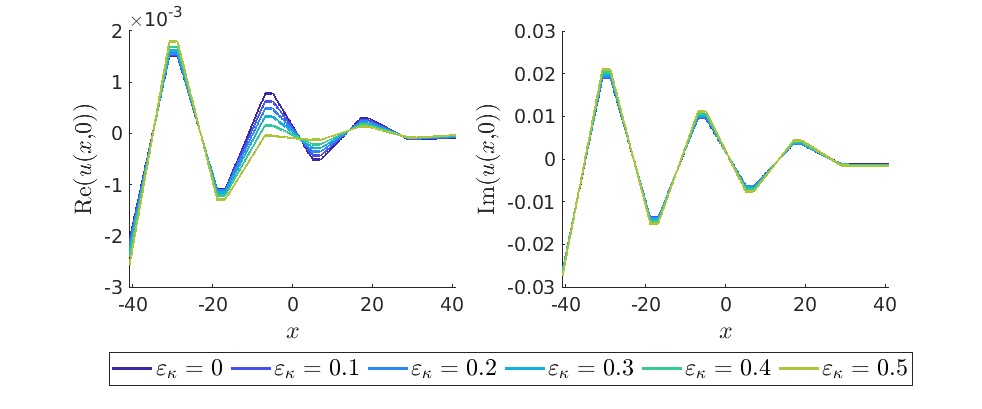}
    \caption{The scattered wave field at time $t=0$ for $N=6$ resonators each of length $\ell=2$ with a spacing of $\ell_{i(i+1)}=10$ between two neighbouring resonators. The operating frequency is taken to be very close to the subwavelength quasifrequency of (\ref{eq:new_CapApprox}) with the largest real part. The corresponding parameters are: $\delta=0.0001,\,v_{\mathrm{r}}=v_0=1,\,\Omega=0.03$.}\label{fig:usc_iterepsk}
\end{figure}

\begin{figure}[H]
    \begin{subfigure}{0.88\textwidth}
        \centering
        \includegraphics[width=0.9\textwidth]{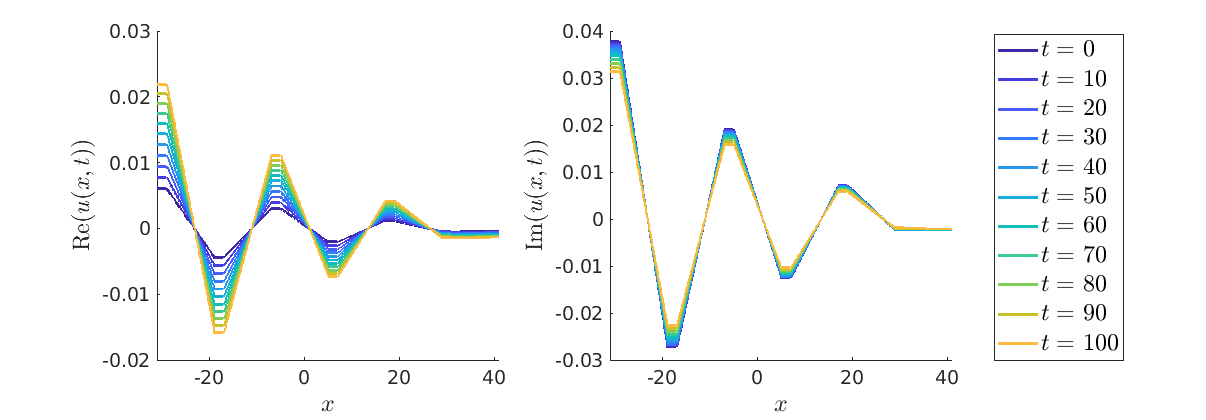}
        \caption{{\footnotesize Static material parameter, \textit{i.e.} $\varepsilon_{\kappa}=0$.}}
        \label{fig:t_evol_static}
    \end{subfigure}
    \centering
    \begin{subfigure}{0.88\textwidth}
        \centering
        \includegraphics[width=0.9\textwidth]{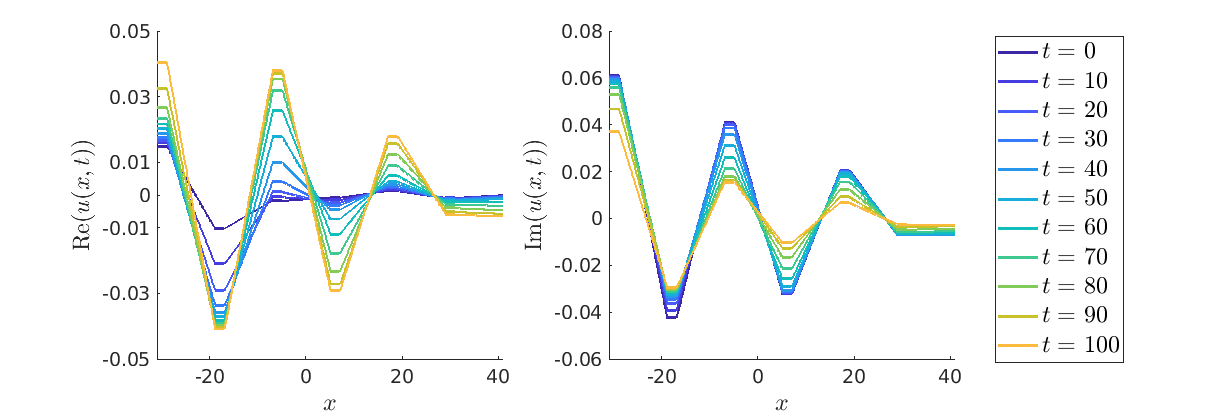}
        \caption{{\footnotesize Time-modulated material parameter with $\varepsilon_{\kappa}=0.9$.}}
        \label{fig:t_evol_tdep}
    \end{subfigure}
    \caption{The real and imaginary parts of the total wave field measured at several times $t$. These numerical results were obtained for $N=6$ resonators each of length $\ell=2$ and spacing $\ell_{i(i+1)}=10$. The operating frequency is taken to be very close to the subwavelength quasifrequency of (\ref{eq:new_CapApprox}) with the largest real part. The corresponding parameters are: $\delta=0.0001,\,v_{\mathrm{r}}=v_0=1,\,\Omega=0.03$.}
    \label{fig:t_evol}
\end{figure}
Firstly, we aim to understand how the time evolution of the total wave field is manipulated by time-modulated material parameters. Comparing the results in Figure \ref{fig:usc_iterepsk} and Figure \ref{fig:t_evol},  it becomes apparent that the amplitude of the total wave field can be manipulated through the amplitude $\varepsilon_{\kappa}$ of the material parameter $\kappa(x,t)$ in a similar way as it would evolve over time. Thus, to obtain a specific amplitude it is more efficient to do that by modulating the material parameter $\kappa$ over time rather than letting the wave propagate until this amplitude is achieved.\par

\section{Scattering coefficients and energy conservation}\label{sec:chpt5}
We shall now solve the governing equations numerically by solving the linear system (\ref{eq:AwFleft}) for the interior coefficients of the solution. Specifically, we want to understand which phenomena can be provoked by time-dependent material parameters. In the remainder of this section, we assume that $\kappa_i(t)$ is given by (\ref{eq:rho_kappa}), but with uniform amplitudes, \textit{i.e.} $\varepsilon_{\kappa,i}=\varepsilon_{\kappa}$ for all $i=1,\dots,N$. We consistently define the phase shifts to be
\begin{align*}
    \phi_{\kappa,i}:=\frac{\pi}{i}.
\end{align*}
Throughout this section, the wave speeds inside and outside of $D$ are assumed to be $v_{\mathrm{r}}=v_0=1$. Moreover, the contrast parameter is $\delta=0.0001$, the modulation frequency is $\Omega=0.03$ and the truncation parameter is set to be $K=4$. Recall that we assume the incident wave field to impinge on the system of resonators from the left-hand side, \textit{i.e.} $u^{\inc}$ is given by (\ref{def:uin_l}), with $\theta_1=1$. \par  
Previous work has shown that one-dimensional systems of static subwavelength resonators obey energy conservation \cite{feppon_1d}. For any system of 
time-modulated subwavelength resonators, we can define reflection and transmission coefficients of the $n$-th mode as follows.
\begin{definition}\label{def:TR}
    The reflection and transmission coefficients of the $n$-th scattered mode of a system of $N$ time-modulated subwavelength resonators $D_i:=(x_i^-,x_i^+)$ are defined by
    \begin{align}
        R_n:=\beta_n^0,\, T_n:=\alpha_n^N,\quad\forall\,n\in\mathbb{Z},
    \end{align}
    respectively. The coefficients $\alpha_n^N$ and $\beta_n^0$ are defined by (\ref{def:alpha_nN}) and (\ref{def:beta_n0}), respectively.
\end{definition}
It is a well-established fact that in the static case, the system must conserve its energy, that is, \cite[Proposition 4.4]{feppon_1d}
\begin{align}\label{eq:staticE}
    |R_0|^2+|T_0|^2=1.
\end{align}
Using the reflection and transmission coefficients we can introduce the notion of scattered energy flux of a system of time-modulated subwavelength resonators. The value of $|R_n|^2+|T_n|^2$ can be seen as the so-called scattering cross-section of the $n$-th mode, which can be understood as the scattering rate of the mode per unit incident flux \cite{lindon}.
\begin{definition}\label{def:totEnergy}
    The scattered energy flux of a system of time-modulated subwavelength resonators is defined by
    \begin{align*}
        E := \sum_{n=-\infty}^\infty |T_n|^2 + |R_n|^2.
    \end{align*}
    We say that a system 
    \begin{itemize}
        \item conserves energy if $E=1$;
        \item loses energy if $E<1$;
        \item gains energy if $E>1$.
    \end{itemize}
\end{definition}
The hereby introduced notion of total energy $E$ agrees with the optical theorem \cite[Theorem 2.10.4.3]{mcmpp}.
\begin{rem}
    Note that the reference value $1$ in Definition \ref{def:totEnergy} comes from the incident flux $(\theta_1)^2=1$.
\end{rem}
\begin{figure}[H]
    \centering
    \includegraphics[width=0.6\textwidth]{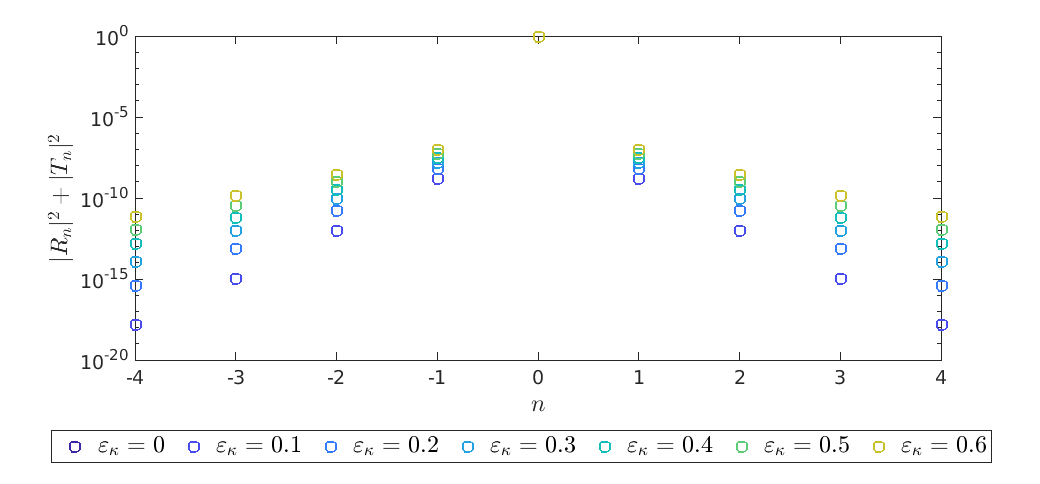}
    \caption{System of $N=6$ resonators of length $\ell_i=2$ and spacing $\ell_{i(i+1)}=10$. The operating frequency $\omega$ is taken to be very close to the subwavelength quasifrequency of (\ref{eq:new_CapApprox}) with the largest real part. Here, the time-modulation gives rise to frequency conversion, whereby the scattered field consists of a superposition of fields of multiple frequencies. }\label{fig:energycons_epskappa}
\end{figure}\par 
Figure \ref{fig:energycons_epskappa} shows the energy distribution over the different modes of the scattered field. It indicates that the energy decays exponentially as a function of the mode order $n$. In this case, the incident field is entirely supported in the mode $n=0$. \par 
Opposed to the static case, under time-modulation of the material parameter $\kappa$, our system does not conserve its energy anymore, as shown in Figure \ref{fig:energycons_epskappa}. How much energy a system loses or gains is affected by the amplitude of modulation $\varepsilon_{\kappa}$ and the operating frequency $\omega$. We can numerically determine the following three regimes:
\begin{enumerate}
    \item[(R1)] $\varepsilon_{\kappa}$ and $\omega$ such that $E>1$ (energy gain);
    \item[(R2)] $\varepsilon_{\kappa}$ and $\omega$ such that $E=1$ (energy conservation);
    \item[(R3)] $\varepsilon_{\kappa}$ and $\omega$ such that $E<1$ (energy loss).
\end{enumerate}
Figure \ref{fig:Et_N126} shows the total energy as a function of $\varepsilon_{\kappa}$ and Figure \ref{fig:Et_N126_omega} shows the total energy as a function of $\omega$. Note that it is not possible to conclude from Figure \ref{fig:usc_iterepsk}, which shows the total wave field at $t=0$ for various values of $\varepsilon_{\kappa}$, whether a system loses or gains energy. Thus, the only way to make a qualitative statement about the system's energy is by considering the value of the total energy $E$, such as in Figures \ref{fig:Et_N126} and \ref{fig:Et_N126_omega}.\par
The numerical results in Figure \ref{fig:Et_N126_omega} show that operating frequencies near the subwavelength quasifrequencies result in a greater energy gain when the material parameter $\kappa$ is time-modulated. We see that in the herewith considered settings, time-modulation of $\kappa$ results in an energy gain, which can be amplified by choosing $\omega$ close to a resonant quasifrequency.
\begin{figure}[H]
    \centering
    \includegraphics[width=\textwidth]{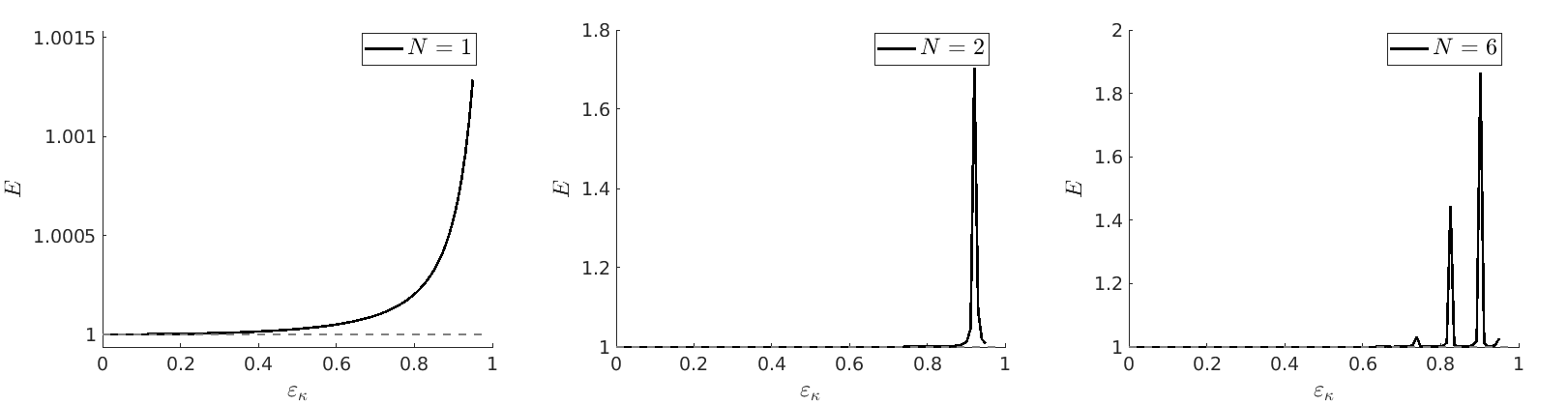}
    \caption{The total energy $E$ for $N\in\{1,2,6\}$ resonators each of length $\ell=2$ and spacing $\ell_{i(i+1)}=10$. The operating frequency is chosen to be close to the subwavelength quasifrequency with the largest real part. The grey dashed line marks the value $(\theta_1)^2=1$ as a reference.}\label{fig:Et_N126}
\end{figure}
\begin{figure}[H]
    \centering
    \includegraphics[width=\textwidth]{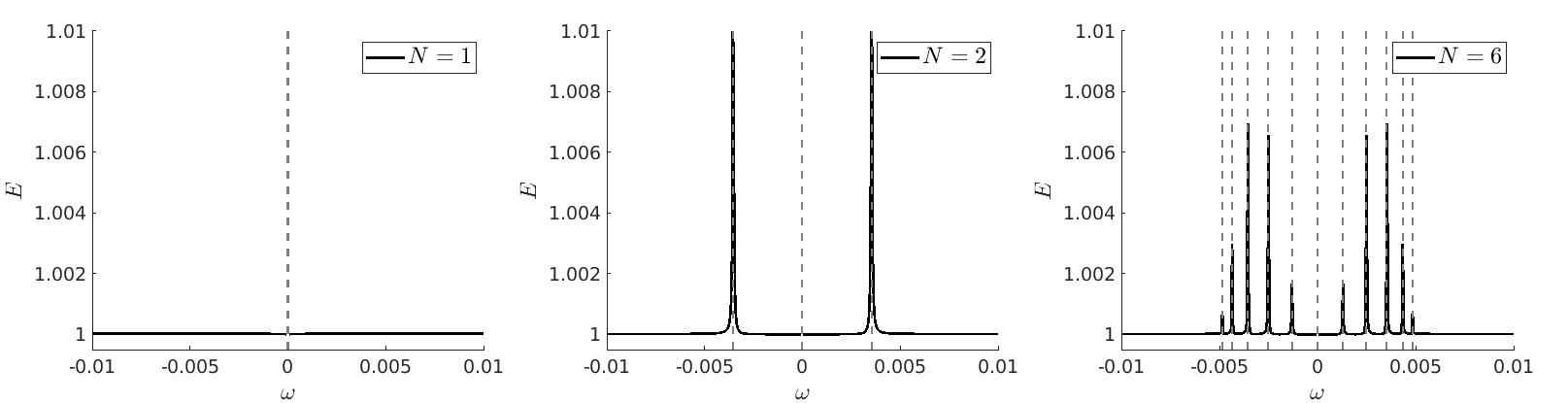}
    \caption{The total energy $E$ for $N\in\{1,2,6\}$ resonators each of length $\ell=2$ and spacing $\ell_{i(i+1)}=10$ with time-modulated material parameter $\kappa$ with amplitude $\varepsilon_{\kappa}=0.6$. The grey dashed line marks the real parts of the subwavelength quasifrequencies $\omega_i$.}\label{fig:Et_N126_omega}
\end{figure}

\section{Concluding remarks}\label{sec:chpt6}
We have derived an explicit formulation of the scattered wave field arising from a system of subwavelength resonators with time-modulated material parameters. This was done by using a pole-pencil decomposition over the subwavelength resonant modes. Additionally, we have provided a numerical scheme to solve the governing equations (\ref{eq:1DL_system}) by rewriting them as a system of linear equations (\ref{eq:lin_system}) whose unknowns are the coefficients of the interior solution. In this work we have considered an incident wave impinging on the resonators from the left-hand side. However, our framework allows any definition for the incident wave field. Contrary to the higher-dimensional case, only the first resonator $D_1$ sees the left-hand side incident wave field. This explains why the vector on the right-hand side of (\ref{eq:AwFleft}) is sparse.\par 
In order to allow for further numerical experiments involving the subwavelength quasifrequencies, we have proved a higher-order, discrete, capacitance matrix approximation for a system resonators in Theorem \ref{thm3.2}. It is worth emphasising that our formula approximates the subwavelength quasifrequencies up to order $O(\delta^{3/2})$, as opposed to $O(\delta)$, formerly derived in \cite[Theorem V.3]{jinghao_liora}. Moreover, we derived a specific formulation of the non-zero subwavelength resonant frequency of a single-resonator system in Corollary \ref{cor:N1}, which can be seen as a generalisation of \cite[Remark 3.4]{feppon_1d} to time-modulated systems.\par 
Opposed to the static case, our numerical results clearly show that upon modulating the material parameter $\kappa$ in time, the energy of a system of subwavelength resonators is not being conserved. We have introduced a new notion of total energy of a system of $N$ subwavelength resonators with time-modulated material parameter $\kappa$ in Definition \ref{def:totEnergy}. The total energy depends on the reflection and transmission coefficients $R_n$ and $T_n$, respectively, for each mode. We have showed how the energy behaves as a function of the amplitude of the modulation $\varepsilon_{\kappa}$ and the operating frequency $\omega$. Note that the ratios between $|R_n|^2+|T_n|^2$ and $\sum_n |R_n|^2+|T_n|^2$  tell us the distribution of the energy over the different modes. \par 
We conclude that this paper gives a novel formulation of the scattered wave field upon an incident wave field, and it presents new key results regarding the total energy of a time-modulated system of subwavelength resonators. Based on the results of this paper, it would be very interesting to derive a novel effective medium theory for the time-modulated setting, which holds true in the dilute regime, as the number of resonators goes to infinity. The generalisation of the point approximation method to the time-modulated setting and its numerical implementation would be the subject of a forthcoming paper. 

\appendix
\section{Code availability}
The codes that were used to generate the results presented in this paper are available under \url{https://github.com/rueffl/Scattering_Problem}.

\addcontentsline{toc}{chapter}{References}
\renewcommand{\bibname}{References}
\bibliography{refs}
\bibliographystyle{plain}

\end{document}